\documentclass[showpacs,twocolumn,aps,prx,longbibliography,superscriptaddress,notitlepage]{revtex4-2}
\usepackage{qcircuit}
\usepackage[dvips]{graphicx}
\usepackage[normalem]{ulem}
\usepackage{amsmath,amssymb,amsthm,mathrsfs,amsfonts,dsfont}
\usepackage{subfigure, epsfig}
\usepackage{braket}
\usepackage{bm}
\usepackage{enumerate}
\usepackage{algorithm}
\usepackage{algpseudocode}
\usepackage{diagbox}
\usepackage{physics}
\usepackage{color}
\usepackage{multirow}
\usepackage[marginal]{footmisc}
\usepackage{comment}
\usepackage{tikz}
\usepackage{gensymb}
%\usetikzlibrary{quantikz}
\usepackage[]{qcircuit}
\usetikzlibrary{arrows}
\usetikzlibrary{shapes,fadings,snakes}
\usetikzlibrary{decorations.pathmorphing,patterns}
\usetikzlibrary{calc}
\usetikzlibrary{positioning}
\usepackage[colorlinks = true]{hyperref}

%\graphicspath{{./figure/}}

\newtheorem{theorem}{Theorem}

\newtheorem{fact}{Fact}
\newtheorem{lemma}{Lemma}
\newtheorem{corollary}{Corollary}
\newtheorem{proposition}{Proposition}
  % Use Input in the format of Algorithm
 % Use Output in the format of Algorithm
\newtheorem{definition}{Definition}

\newcommand{\mc}{\mathcal}

\newcommand{\mbb}{\mathbb}

\newcommand{\comments}[1]{}

\begin{document}

\title{Separation between Entanglement Criteria and Entanglement Detection Protocols}

\begin{abstract}
Entanglement detection is one of the most fundamental tasks in quantum information science, playing vital roles in theoretical studies and quantum system benchmarking. 
Researchers have proposed many powerful entanglement criteria with high detection capabilities and small observable numbers. 
Nonetheless, entanglement criteria only represent mathematical rules deciding the existence of entanglement. 
The relationship between a good entanglement criterion and an effective experimental entanglement detection protocol (EDP) is poorly understood.
In this study, we introduce postulates for EDPs about their detection capabilities and robustness and use them to show the difference between entanglement criteria and EDPs.
Specifically, we design an entanglement detection task for unknown pure bipartite states and demonstrate that the sample complexity of any EDP and the number of observables for a good entanglement criterion can have exponential separation.
Furthermore, we discover that the optimal EDP with the lowest sample complexity does not necessarily correspond to the optimal entanglement criterion with the fewest observables.
Our results can be used to prove the exponential speedups achieved through quantum memory and be generalized to multipartite entanglement detection.
By highlighting the significance and independence of EDP design, our work holds practical implications for entanglement detection experiments.

\end{abstract}

\author{Zhenhuan Liu}
%\email{liu-zh20@mails.tsinghua.edu.cn}
\thanks{These authors contributed equally to this work.}
\affiliation{Center for Quantum Information, Institute for Interdisciplinary Information Sciences, Tsinghua University, Beijing 100084, China}

\author{Fuchuan Wei}
\thanks{These authors contributed equally to this work.}
\affiliation{Yau Mathematical Sciences Center and Department of Mathematics, Tsinghua University, Beijing 100084, China}
%\altaffiliation{wfc21@mails.tsinghua.edu.cn}

\maketitle

\textbf{Introduction.}
The ``spooky action", entanglement, has long been recognized as a crucial resource that allows quantum physics to exhibit advantages over classical systems in tasks such as quantum communication and simulation \cite{Horodecki2009entanglement}.
Consequently, the task of entanglement detection holds practical and theoretical significance, leading to the proposal of numerous approaches \cite{GUHNE2009detection}.
However, it is important to note that most of these proposals are essentially entanglement criteria (EC), which are mathematical rules that provide sufficient conditions for the presence of entanglement.
Experimentally implementing an EC requires additional effort to design a practical entanglement detection protocol (EDP).

The positive map criterion \cite{GUHNE2009detection} is a typical example to show the gap between EC and EDP, which detects entanglement by telling the positivity of the density matrix after a positive but not completely positive map.
Including the positive partial transposition (PPT) criterion \cite{peres1996ppt}, the positive map criterion finds widespread use in theoretical studies of entanglement due to its concise mathematical form and strong detection capability \cite{vidal2002computable,calabrese2012field,shapourian2021negativity}.
Two main approaches have been developed to employ the positive map criterion in real experiments.
The first approach is measuring entanglement witnesses derived from the corresponding positive map criterion \cite{GUHNE2009detection}, such as $W=\ketbra{\psi}{\psi}^{\mathrm{T}_A}$ for PPT criterion, where $\mathrm{T}_A$ represents the partial transposition operation on subsystem $A$.
When $\ket{\psi}$ has a simple form, this method is experimentally friendly while exponentially weakening the detection capability of the original positive map criterion \cite{liu2022fundamental}. 
The second approach entails measuring the moments of the density matrices processed by the positive map and using these moments to infer the positivity of the target matrix \cite{gray2018ml,elben2020mixed,zhou2020Single,yu2021optimal,Neven2021resolved}. 
This method has much higher detection capability compared with the entanglement witness approach \cite{liu2022fundamental,yu2021optimal}, while moment measurement requires exponential experimental times \cite{elben2020mixed,Neven2021resolved,zhou2020Single,liu2022permutation}.
In conclusion, although positive map criteria are theoretically powerful, applying them in real experiments requires compromising the detection capability or dealing with high experimental complexity.

Over the years, researchers gradually realized that there might exist some fundamental limitations in entanglement detection \cite{znidaric2007EW,bhosale2012wigner,aubrun2014entanglement,collins2016random}, especially when they tried to construct a powerful EC with a minimal number of observables \cite{weilenmann2020faithful,gunhe2021faithful,zhou2019decomposition,Bae2022MUB,morelli2023resource}.
Remarkably, the fundamental limitation of EC was derived in Ref.~\cite{liu2022fundamental}, where the authors established a fundamental trade-off between the observable number and the detection capability of EC.
However, this result cannot be directly applied to analyze the performances of EDPs as the number of observables does not directly equal the experimental sample complexity.

With the rapid development of quantum hardware, entanglement detection gradually transforms from a theoretical problem to an essential task for quantum simulation and benchmarking \cite{Haffner2005ion,Leibfried2005cat,monz2011fourteen,pan2012multiphoton,adam2016thermalization,omran2019cat,brydges2019probing,chao2019twenty,wang2018sixphoton,Cao2023super}.
At the same time, performances of entanglement detection protocols in practical scenarios attract more and more attention \cite{zhou2020coherent,morelli2022imprecise,Miller2023graph}.
To provide a guide for designing entanglement detection experiments, it is an urgent problem to figure out the complexities and limitations of EDPs.
Specifically, some critical questions arise: does a good EC correspond to a good EDP? Does a fundamental limitation of EDP also exist? Can we use some quantum resources to reduce the sample complexity of EDP?

In this work, we provide answers to these questions.
We first clarify the difference between EC and EDP and propose two postulates of a good EDP about its detection capability and error robustness.
Then, we design an entanglement detection task that can be solved efficiently using an EC constructed with a single observable, while any EDP satisfying our postulates requires exponential repetition times. 
This example demonstrates that the observable number of EC and sample complexity of EDP can have exponential separation.
Furthermore, in this task, we find that the optimal EDP with the lowest sample complexity does not correspond to the optimal EC with minimal observables, further showing that designing EDP and EC are two independent tasks.
In addition, our results help to show that quantum memory can bring exponential speedups for EDP.
We hope our work can remind entanglement detection experimenters that the practical performance of the corresponding EDP is a nontrivial problem that should be considered independently from EC when designing experiments.

\textbf{EC and EDP.}
Entangled states are those that cannot be written in a separable form
\begin{equation}
\rho=\sum_{i}p_i\rho_{A}^i\otimes\rho_{B}^i,
\end{equation}
where $p_i\ge 0$ represents probability, $A$ and $B$ represent two subsystems. Entanglement criteria are theoretical rules that distinguish some entangled states from separable ones \cite{GUHNE2009detection}. Mathematically speaking, an EC can be represented as a function of the target state $C(\rho)$, which satisfies that $C(\rho)\ge 0$ for all the separable $\rho$. 
For example, $C(\rho)=\Tr(W\rho)$ represents the entanglement witness criterion and $C(\rho)=\lambda_{\mathrm{min}}(\rho^{\mathrm{T}_A})$ represents the PPT criterion, where $W$ is a witness operator and $\lambda_{\mathrm{min}}$ represents the minimal eigenvalue.
As shown in Fig.~\ref{fig:EC_EDP}(a), given a target state, a valid EC tells you if the state is entangled when $C(\rho)<0$, or if it is uncertain when $C(\rho)\ge 0$. 
As EC are just theoretical rules, we can ask EC to never make mistakes by classifying separable states into entangled ones. 
Thus, to design a powerful EC, we only need to optimize the ratio of entangled states that can be detected by it, i.e., the detection capability.

\begin{figure}
\centering
\includegraphics[width=0.48\textwidth]{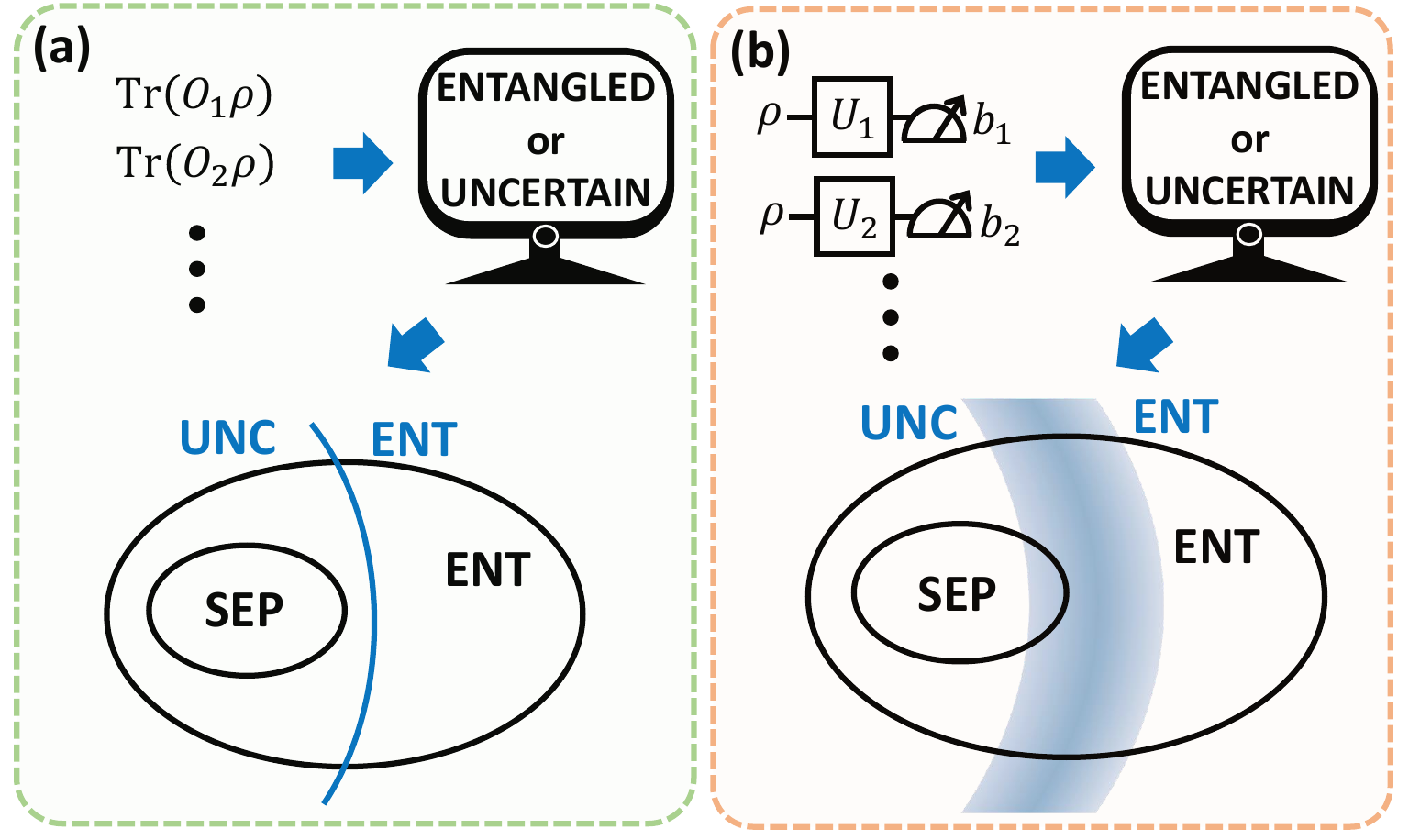}
\caption{
\emph{Difference between EC and EDP.}
ENT and SEP represent sets of entangled and separable states, respectively.
Here, we only consider EC and EDPs constructed by single-copy observable expectation values and quantum experiments.
(a) EC represents mathematical rules to decide whether a state is entangled and never makes mistakes.
The blue line represents those states with $C(\rho)=0$, states at the left of the line satisfy $C(\rho)>0$, and vice versa.
(b) EDP represents rules of physical operations and measurements to infer entanglement with finite rounds of experiments. 
The blue line for $C(\rho)$ becomes a blue area for $\hat{C}(\rho)$ whose darkness is proportional to the probability for a state to satisfy $\hat{C}(\rho)=0$.
In principle, any state has a nonzero probability of being classified as entangled or uncertain (UNC) by an EDP.
}
\label{fig:EC_EDP}
\end{figure}

In real experiments, we deal with EDP, which represents rules about operating and measuring the target quantum states and using the measurement results to infer whether the target state is entangled. 
EDP also decides the entanglement of the target state through a function, $\hat{C}(\rho)$, whose value is calculated using the experiment data. 
Due to the intrinsic randomness in quantum measurements and the finite experiment times, the value of $\hat{C}(\rho)$ has uncertainty in different rounds of experiments, as shown in Fig.~\ref{fig:EC_EDP}(b). 
Thus, there always exists the possibility for EDP to classify a separable state into an entangled one and make mistakes, which is the most important difference between EC and EDP. 
We can thus summarize two postulates for a good EDP about its detection capability and robustness to errors.
\begin{enumerate}
\item Completeness: the probability of outputting the result of entanglement should be large enough, i.e., 
\begin{equation}\label{eq:completeness}
\mathrm{Pr}[\hat{C}(\rho)<0]=\Theta(1);
\end{equation}
\item Soundness: the probability of actually being entangled when outputting the result of entanglement should be large enough, i.e., 
\begin{equation}\label{eq:soundness}
\mathrm{Pr}[\rho\in\mathrm{ENT}|\hat{C}(\rho)<0]=\frac{1}{2}+\Theta(1).
\end{equation}
\end{enumerate}
Note that completeness is the only requirement for EC.
As an EDP is commonly constructed based on some EC, the first postulate requires that the EDP should be based on a good EC, while the second one mainly puts the requirement for the sample complexity to suppress the error.

To analyze the performances of EC and EDP, we need to specify the underlying distribution of quantum states.
In practical situations, the state distribution is determined by experiment setups and can be very concentrated.
Meanwhile, the performance of entanglement detection when lacking prior knowledge of the target state is also an important indicator, which reflects the general detection power of entanglement detection methods. 
A parameterized distribution $\pi_{d,k}$, which is induced by a higher-dimensional random pure-state distribution \cite{Karol2001induce}, is commonly adopted to analyze performances of different entanglement detection methods \cite{collins2016random}.

\begin{definition}
A state $\rho\in D(\mc{H}_d)$ is said to be sampled according to the distribution $\pi_{d,k}$ if and only if it is the reduced density matrix of a $d\times k$-dimensional random pure state $\ketbra{\psi}{\psi}\in D(\mc{H}_{d\times k})$. 
Here, by random, we mean that $\ket{\psi}=U\ket{0}$ where $U$ is a Haar-measure random unitary matrix and $\ket{0}$ is a fixed pure state.
\end{definition}

Without loss of generality, hereafter, we consider $\rho$ has two subsystems $A$ and $B$ with dimensions $d_A=d_B=\sqrt{d}$. 
Based on this state distribution, it is proved that any EC constructed by expectation values measured on single copy of $\rho$ has a fundamental limitation \cite{liu2022fundamental}.

\begin{fact}\label{fact:fundamental}
Consider an entanglement criterion $C(\rho)$ which only depends on the expectation values of single-copy observables, $C(\rho)=f(\Tr(O_1\rho),\cdots,\Tr(O_M\rho))$ with $f(\cdot)$ being a multivariate function. If we want to make sure the probability $\mathrm{Pr}_{\rho\sim\pi_{d,k}}\left[C(\rho)<0\right]$ is a positive constant, the number of observables should at least $M=\tilde{\Omega}(k)$.
\end{fact}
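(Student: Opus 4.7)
Since $C(\rho)=f(T(\rho))$ with $T(\rho):=(\Tr(O_1\rho),\ldots,\Tr(O_M\rho))\in\mathbb{R}^M$, the set of ``detected'' states has the form $T^{-1}(B)$ for $B:=f^{-1}((-\infty,0))\subset\mathbb{R}^M$. The plan is to exploit two structural features of $\pi_{d,k}$: strong concentration of $T(\rho)$ around its mean $T(\id/d)$, and invariance of the distribution under local unitaries $U_A\otimes U_B$. Validity of the EC forces $B$ to avoid the image $T(\mathrm{SEP})$, which contains an open neighborhood of $T(\id/d)$; completeness forces $(T_*\pi_{d,k})(B)=\Omega(1)$. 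The tension between these two requirements will lower bound $M$.

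First I would compute second moments of $T$ via the Haar identity $\mathbb{E}[\ketbra{\psi}{\psi}^{\otimes 2}]=\Pi_{\mathrm{sym}}/\binom{dk+1}{2}$, which yields $\mathrm{Var}[\Tr(O_i\rho)]=(d\Tr(O_i^2)-(\Tr O_i)^2)/(d^2(dk+1))$. Levy's lemma on the purifying sphere $S^{2dk-1}$ then upgrades this to sub-Gaussian tails of scale $\|O_i\|/\sqrt{dk}$ around $\Tr(O_i)/d$, since $\ket{\psi}\mapsto\bra{\psi}(O_i\otimes\id_k)\ket{\psi}$ is $O(\|O_i\|)$-Lipschitz. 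Consequently $T_*\pi_{d,k}$ is essentially supported in an ellipsoid of radius $\sim 1/\sqrt{dk}$ around $T(\id/d)$.

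Next, I would use local-unitary invariance: replacing $O_i$ by $(U_A\otimes U_B)O_i(U_A^\dagger\otimes U_B^\dagger)$ with Haar-random $U_A,U_B$ preserves the detection probability, so averaging over the twirl is free. The averaged observables then probe $\rho-\id/d$ in $M$ isotropic random directions of the $(d^2-1)$-dimensional traceless subspace, and Weingarten calculus gives explicit control over their joint covariance. The final step is to construct an auxiliary separable distribution $\sigma$ — for instance, a convex mixture of locally-twirled pure product states whose first two moments match those of a separable approximation to $\id/d$ — and compare the two pushforwards: a chi-squared/Le Cam two-point bound will show that distinguishing $T_*\pi_{d,k}$ from $T_*\sigma$ with constant success probability requires $M=\tilde\Omega(k)$, since the per-observable contribution to the chi-squared divergence is $O(1/k)$ after twirling.

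The main obstacle is attaining the near-linear-in-$k$ rate rather than the weak $\Omega(\log k)$ that a naive volume/covering argument on the ellipsoid would yield. Sharpness requires demonstrating that the $M$ twirled observables behave as information-theoretically independent probes with per-observable signal $O(1/\sqrt{k})$ between $\pi_{d,k}$ and $\sigma$, so that the chi-squared divergences tensorize to $O(M/k)$. Uniform control over Weingarten moments across all $M$ observables, and the accompanying union bounds, is the source of the polylogarithmic slack absorbed into the tilde in $\tilde{\Omega}(k)$.
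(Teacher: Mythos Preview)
This statement is not proved in the present paper: it is stated as Fact~\ref{fact:fundamental} and attributed to Ref.~\cite{liu2022fundamental}, so there is no in-paper proof to compare your proposal against. You are effectively attempting an independent proof of a result that the authors import wholesale.

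That said, your outline is in the right spirit but remains a sketch at the crucial step. The easy parts---reducing to the pushforward $T_*\pi_{d,k}$, Levy concentration on the purifying sphere giving sub-Gaussian fluctuations of scale $\|O_i\|/\sqrt{dk}$ around $T(\mathbb{I}/d)$, and the observation that $\mathbb{I}/d\in\mathrm{SEP}$ so $B$ must avoid a neighborhood of $T(\mathbb{I}/d)$---are fine. The hard part, which you correctly flag, is upgrading this from the trivial $\Omega(\log k)$ to $\tilde\Omega(k)$. Your proposed mechanism (twirl the $O_i$, argue they become isotropic random probes, then run a chi-squared two-point argument against an auxiliary separable ensemble $\sigma$) is not yet a proof: you have not specified $\sigma$ concretely, you have not shown that the per-observable chi-squared contribution is actually $O(1/k)$ after twirling (this requires controlling fourth moments of $\rho-\mathbb{I}/d$ under $\pi_{d,k}$ versus $\sigma$, not just second moments), and the claim that the $M$ twirled observables ``tensorize'' information-theoretically is asserted rather than argued---in general, correlated coordinates do not give additive chi-squared divergence. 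These are precisely the ingredients that carry all the weight, so as written the proposal identifies the target but does not hit it.
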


For EC, the state distribution is only employed to calculate the detection power, i.e., the completeness as defined earlier. 
For EDP, the state distribution is also adopted to specify the robustness to errors, i.e., the soundness.

\textbf{Exponential Separation.}
In this section, we will use a specific entanglement detection task to show the gap between EC and EDP.
The distribution $\pi_{d,k}$ is not a good example, as it has been proved that a state sampled according to $\pi_{d,k}$ is entangled with probability asymptotically $1$ for $k$ in a large range \cite{aubrun2012phase}.
Thus, if directly considering $\pi_{d,k}$, there exists a simple EDP that is just outputting ``entangled," as EDP allows for a nonzero error probability.
To benefit our following discussions, we amplify the separable component in $\pi_{d,k}$ and define a new state distribution $\pi_{d,k}^*$.

\begin{definition}
A state $\rho\in D(\mathcal{H}_d)$ is said to be sampled according to the distribution of $\pi_{d,k}^*$ if and only if with probability $0.5$, it is sampled from $\pi_{d,k}$; with probability $0.5$, it is the tensor product of two random states $\rho_A\otimes\rho_B$, which are independently sampled from $\pi_{\sqrt{d},\sqrt{k}}$.
\end{definition}

In this section, we mainly consider the case of pure states, i.e., $k=1$, while leaving the discussion of larger values of $k$ in Appendix \ref{sec:generalization} .
As the only requirement for EC is completeness, we only care about the number of observables needed for detecting the entanglement of $\pi_{d,1}^*$ with a constant probability.
As $\pi_{d,1}^*$ contains a constant portion of $\pi_{d,1}$, Fact~\ref{fact:fundamental} shows that a small number of observables might be enough to detect the entanglement of $\pi_{d,1}^*$ with a constant probability, no matter how large $d$ is.
Fortunately, this is indeed the case.
We use the SWAP operator $S$ defined between subsystems $A$ and $B$, which satisfies $S\ket{\psi_A,\psi_B}=\ket{\psi_B,\psi_A}$, to show this.
The SWAP operator is actually an entanglement witness as $S=\ketbra{\Phi^+}{\Phi^+}^{\mathrm{T}_A}$ with $\ket{\Phi^+}=\sum_{i=1}^{\sqrt{d}}\ket{ii}$ being the unnormalized maximally entangled state. 
\begin{theorem}\label{theorem:Ratio}
Considering that $\rho$ is sampled according to $\pi_{d,1}^*$, the detection power of $S$ satisfies
\begin{equation}
\underset{\rho\sim\pi_{d,1}^*}{\mathrm{Pr}}\left[\Tr(S\rho)<0\right]=\frac{1}{2}I_{1/2}\left(\frac{d+\sqrt{d}}{2},\frac{d-\sqrt{d}}{2}\right),
\end{equation}
where $I_{\cdot}(\cdot,\cdot)$ is the regularized incomplete beta function.
\end{theorem}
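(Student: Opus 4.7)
The plan is to split the probability over the two branches of the mixture $\pi_{d,1}^*$, show that one branch contributes nothing, and reduce the other branch to a Beta-distribution tail. First I would handle the product branch. For $\rho = \ketbra{\psi_A}\otimes\ketbra{\psi_B}$ with $\ket{\psi_A},\ket{\psi_B}$ pure, a one-line computation using the defining property $S\ket{\psi_A,\psi_B}=\ket{\psi_B,\psi_A}$ gives $\Tr(S\rho)=|\braket{\psi_A|\psi_B}|^2\geq 0$. Hence this branch contributes $0$ to $\mathrm{Pr}[\Tr(S\rho)<0]$, and the whole probability collapses to $\tfrac{1}{2}\,\mathrm{Pr}_{\rho\sim\pi_{d,1}}[\Tr(S\rho)<0]$.

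Next I would analyze the Haar branch. Using the standard decomposition $\mathcal{H}_A\otimes\mathcal{H}_B = \mathcal{H}_{\mathrm{sym}}\oplus\mathcal{H}_{\mathrm{asym}}$ with $\dim\mathcal{H}_{\mathrm{sym}}=\sqrt{d}(\sqrt{d}+1)/2=(d+\sqrt{d})/2$ and $\dim\mathcal{H}_{\mathrm{asym}}=(d-\sqrt{d})/2$, the SWAP operator acts as $+1$ on the symmetric subspace and $-1$ on the antisymmetric one. Writing a Haar-random pure state $\ket{\psi}$ as $\sqrt{p}\ket{\phi_s}+\sqrt{1-p}\ket{\phi_a}$ with $\ket{\phi_s},\ket{\phi_a}$ normalized vectors in the two subspaces, one obtains the identity
\begin{equation}
\Tr(S\rho) = \bra{\psi}S\ket{\psi} = p - (1-p) = 2p-1,
\end{equation}
so the event $\Tr(S\rho)<0$ is exactly the event $p<1/2$.

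The last step is distributional. For a Haar-random unit vector in $\mathbb{C}^d$, the squared norm $p$ of its projection onto any fixed subspace of dimension $m$ is distributed as $\mathrm{Beta}(m,d-m)$; this is a standard consequence of the Haar measure being the uniform measure on the sphere, and of the complex Gaussian representation $\ket{\psi}=\mb{g}/\|\mb{g}\|$ with $\mb{g}$ i.i.d.\ complex Gaussian. Plugging in $m=(d+\sqrt{d})/2$ and evaluating the CDF at $1/2$ yields $\mathrm{Pr}[p<1/2]=I_{1/2}\!\left((d+\sqrt{d})/2,(d-\sqrt{d})/2\right)$. Combining with the prefactor $1/2$ from the mixture completes the proof.

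The calculation is essentially routine once the distributional fact is invoked, so I do not expect a hard obstacle. The only delicate point is keeping the parameters of the Beta distribution in the right order (the first parameter should be the dimension of the $+1$ eigenspace, since that is where the ``symmetric mass'' $p$ lives), and being clear that the regularized incomplete beta function $I_x(a,b)$ we are invoking is the CDF convention $\mathrm{Pr}[X\leq x]$ for $X\sim\mathrm{Beta}(a,b)$. These are conventions that should be explicitly stated in the write-up to avoid ambiguity.
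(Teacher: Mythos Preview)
Your proposal is correct and follows essentially the same approach as the paper: both split over the two branches of $\pi_{d,1}^*$, dispose of the product branch by noting $\Tr(S\,\psi_A\otimes\psi_B)\ge 0$, decompose into the $\pm 1$ eigenspaces of $S$, and use the complex Gaussian representation of the Haar measure to identify the relevant distribution. The only cosmetic difference is that the paper phrases the final step via the beta prime distribution of the ratio $\sum_{\mathrm{sym}}|v_x|^2\big/\sum_{\mathrm{asym}}|v_x|^2$ and its CDF $I_{y/(1+y)}(a,b)$ at $y=1$, whereas you go directly through $p\sim\mathrm{Beta}(a,b)$ and evaluate its CDF at $1/2$; these are equivalent statements.
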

We leave the detailed proof in Appendix~\ref{sec:SWAP_analysis} and conjecture that the detection power of $S$ is asymptotically a constant, $\lim_{d\to \infty}I_{1/2}(\frac{d+\sqrt{d}}{2},\frac{d-\sqrt{d}}{2})=\mathrm{Const}$.
We numerically test this conjecture using Fig.~\ref{fig:PPT_witness}, in which the detection power converges to some constant when the dimension increases.

\begin{figure}
\centering
\includegraphics[width=0.45\textwidth]{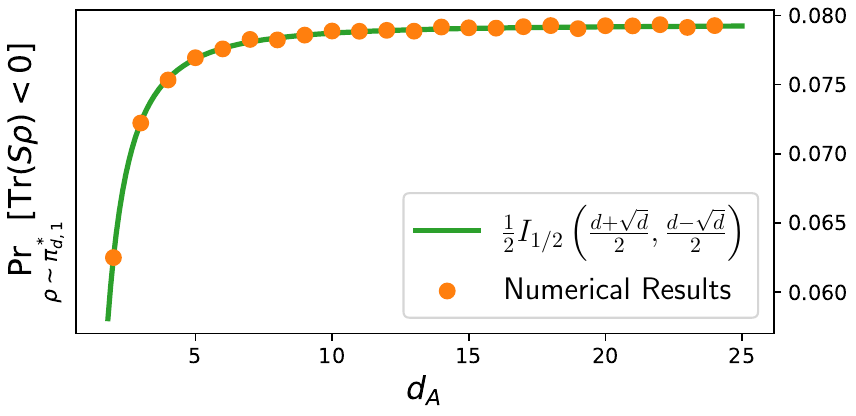}
\caption{The probability for detecting the entanglement in $\pi_{d,1}^*$ using the entanglement witness $S=\ketbra{\Phi^+}{\Phi^+}^{\mathrm{T}_A}$. The $x$-axi represents the dimension of the subsystem $A$, which is the square root of the dimension of the whole system. For each orange point, we sample $10^7$ $\rho$'s from $\pi_{d,1}^*$ and calculate the ratio of these states satisfying $\Tr(S\rho)<0$.}
\label{fig:PPT_witness}
\end{figure}

However, considering the physical implementation of this EC, i.e., the corresponding EDP, which also depends on single-copy quantum operations, we find that the sample complexity is high.
The first reason is that if we randomly select a state according to $\pi_{d,1}^*$, the expectation value is normally exponentially small, around $\frac{1}{\sqrt{d}}$.
Thus, one should keep the measurement error exponentially small to estimate a negative value with high certainty.
Secondly, considering practical situations, when measuring $S$, in every experiment time, one randomly gets a result $+1$ or $-1$, which are eigenvalues of $S$.
Thus, the variance of a single experiment result is a constant.
According to Hoeffding's inequality, to hold the soundness requirement for EDP, the complexity is around $\mathcal{O}(d)$.

From this example, we find that if we directly design an EDP based on a good EC, the sample complexity might be exponentially high. 
One may further ask if a better EDP for this entanglement detection task exists or if a fundamental limitation exists.
In the following, we will show that a better EDP does exist, and at the same time, \emph{an exponential sample complexity lower bound also exists for any EDP, although this is an easy task for EC in terms of the observable number}.

\begin{theorem}\label{theorem:exponential_complexity}
Given multiple copies of a state $\rho$ that is sampled from $\pi_{d,1}^*$, if only single-copy quantum operations are allowed, the minimum number of copies required by an adaptive EDP with completeness, $\Pr_{\rho\sim\pi_{d,1}^*}\left[\hat{C}(\rho)<0\right]=\frac{1}{4}$, and soundness, $\Pr_{\rho\sim\pi_{d,1}^*}\left[\rho\in\operatorname{ENT}\Big|\hat{C}(\rho)<0\right]=\frac{5}{6}$, scales as $\Theta(d^{1/4})$.
\end{theorem}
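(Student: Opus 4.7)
The theorem requires matching upper and lower bounds of $d^{1/4}$.

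For the \textbf{upper bound}, the key observation is that the two sub-ensembles of $\pi_{d,1}^*$ induce very different reduced states on subsystem $A$: for a Haar-random bipartite pure state the reduced state $\rho_A$ is, by a standard concentration (Page-type) calculation, close to the maximally mixed state on a $d_A=\sqrt d$-dimensional space, hence $\Tr(\rho_A^2)=\mathcal{O}(1/\sqrt d\,)$; for a random product pure state, $\rho_A$ is itself pure, so $\Tr(\rho_A^2)=1$. I would therefore construct an EDP that discards subsystem $B$ and performs single-copy random measurements on $A$ (e.g., random Clifford measurements), forming the standard classical-shadow estimator of $\Tr(\rho_A^2)$ and thresholding at a constant between the two means. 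This reduces the problem to the known pure-vs.-mixed certification task on a $d_A$-dimensional system, whose single-copy sample complexity is $\Theta(\sqrt{d_A})=\Theta(d^{1/4})$, which directly yields an EDP with the required completeness and soundness.

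For the \textbf{lower bound}, the soundness condition forces any valid EDP to distinguish the entangled sub-ensemble $\pi_{d,1}$ from the product sub-ensemble $\pi_{\sqrt d,1}\otimes\pi_{\sqrt d,1}$ of $\pi_{d,1}^*$ with $\Theta(1)$ advantage; otherwise the posterior probability that $\rho\in\mathrm{ENT}$ conditioned on $\hat C(\rho)<0$ could not exceed $\tfrac12+o(1)$. I would reduce the EDP problem to this two-hypothesis distinguishing task and apply the adaptive-POVM-tree framework: the classical transcript of any $n$-round adaptive single-copy protocol has total-variation distance under the two ensembles bounded, via a chi-squared chain-rule argument, by a sum of per-round quantities comparing the Haar-averaged outcome distributions. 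A Weingarten-calculus computation shows that the two ensembles share the same first moment $I/d$ and differ only at the second moment by $\mathcal{O}(1/\sqrt d\,)$ on any single-copy observable, which translates into a per-round TV/Hellinger bound of $\mathcal{O}(d^{-1/4})$; summing over $n$ rounds and requiring constant distinguishability yields $n=\Omega(d^{1/4})$.

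The \textbf{main obstacle} is the lower bound, specifically the handling of adaptivity: after conditioning on past measurement outcomes, the posterior ensemble on $\rho$ is no longer Haar-invariant, so the per-round moment bound must be phrased in a Bayesian-updatable form (via chi-squared or Hellinger divergence) and composed using the chain rule rather than naive union bounds. Obtaining the sharp $d^{-1/4}$ per-round scaling requires carefully extracting second-moment differences between the two ensembles while exploiting $U_A\otimes U_B$ invariance; ensuring that the chain-rule composition does not lose extra logarithmic factors is the remaining technical bookkeeping.
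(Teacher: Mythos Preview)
Your upper bound is essentially identical to the paper's: both estimate $\Tr(\rho_A^2)$ on the $d_A=\sqrt d$-dimensional subsystem via single-copy randomized measurements and use the $\Theta(\sqrt{d_A})=\Theta(d^{1/4})$ complexity of constant-accuracy purity estimation.

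For the lower bound you take a genuinely different route, and the sketch as written has a gap. The paper does \emph{not} run a per-round chi-squared/Hellinger chain rule. Instead it compares the full $T$-round leaf distribution of each ensemble to that induced by $\mathbb{I}_d/d$: along any root-to-leaf path with rank-one POVM vectors $\ket{\phi_t}$ one has
\[
\frac{\mathbb{E}_{\psi}\,p_\psi(l)}{p_{\mathbb{I}/d}(l)}
= d^T\,\bra{\Phi}\Big(\mathbb{E}_\psi\,\psi^{\otimes T}\Big)\ket{\Phi},
\qquad \ket{\Phi}=\bigotimes_{t}\ket{\phi_t},
\]
and then uses $\mathbb{E}_\psi\psi^{\otimes T}\propto\sum_{\sigma\in S_T}W_\sigma$. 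For the product ensemble the analogous ratio involves $\bra{\Phi}\mathcal{V}_T\big(\sum_{\sigma_1}W_{\sigma_1}\big)\otimes\big(\sum_{\sigma_2}W_{\sigma_2}\big)\mathcal{V}_T^{\dagger}\ket{\Phi}$, and the technical core is a positivity lemma (proved by induction with tensor-network diagrams) showing this is $\ge 1$ for every product $\ket{\Phi}$. This yields a uniform lower bound on the likelihood ratio at every leaf, so adaptivity is handled for free; the resulting TV bound is $1-\big(1+\tfrac{T-1}{\sqrt d}\big)^{-2(T-1)}\approx 2(T-1)^2/\sqrt d$, whence $T=\Omega(d^{1/4})$.

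The concrete problem with your plan is the claim of a fixed per-round TV/Hellinger contribution of order $d^{-1/4}$ that sums linearly. Because both ensembles have first moment $\mathbb{I}/d$, the very first round carries \emph{zero} information about which ensemble was chosen; distinguishing power only enters through inter-copy correlations, so the accumulation in $T$ is quadratic, not linear (as the paper's bound $T^2/\sqrt d$ makes explicit). A chain-rule argument can in principle be made to work, but the per-round contribution must be allowed to grow with the round index---morally $O(t/\sqrt d)$ at step $t$---which is exactly the posterior-sharpening effect you flag as the main obstacle. The paper's ratio-to-maximally-mixed technique sidesteps this bookkeeping entirely, at the price of the combinatorial lemma above.
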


We leave the detailed proof in the Appendix~\ref{sec:bipartite_lower_bound} and sketch core ideas here. 
To prove the lower bound, we first notice that if the EDP described in this theorem exists, it can be used to solve a state discrimination task.
Specifically, given multiple copies of a state $\rho$, the state discrimination task is to decide whether it is sampled from the set $s_1 = \{\rho\sim\pi_{d,1}\}$ or $s_2=\{\rho=\rho_A\otimes\rho_B,\rho_A,\rho_B\sim\pi_{\sqrt{d},1}\}$, with equal probabilities.
With the EDP given in Theorem~\ref{theorem:exponential_complexity}, one can classify the state to $s_1$ when the EDP outputs $\hat{C}(\rho)<0$ and to $s_2$ otherwise.
With the conditions of completeness and soundness, the success probability of this strategy reaches $\frac{2}{3}$.
Therefore, if this state discrimination task has a sample complexity lower bound, it is also a lower bound of the EDP.

\begin{figure}
\centering
\includegraphics[width=0.35\textwidth]{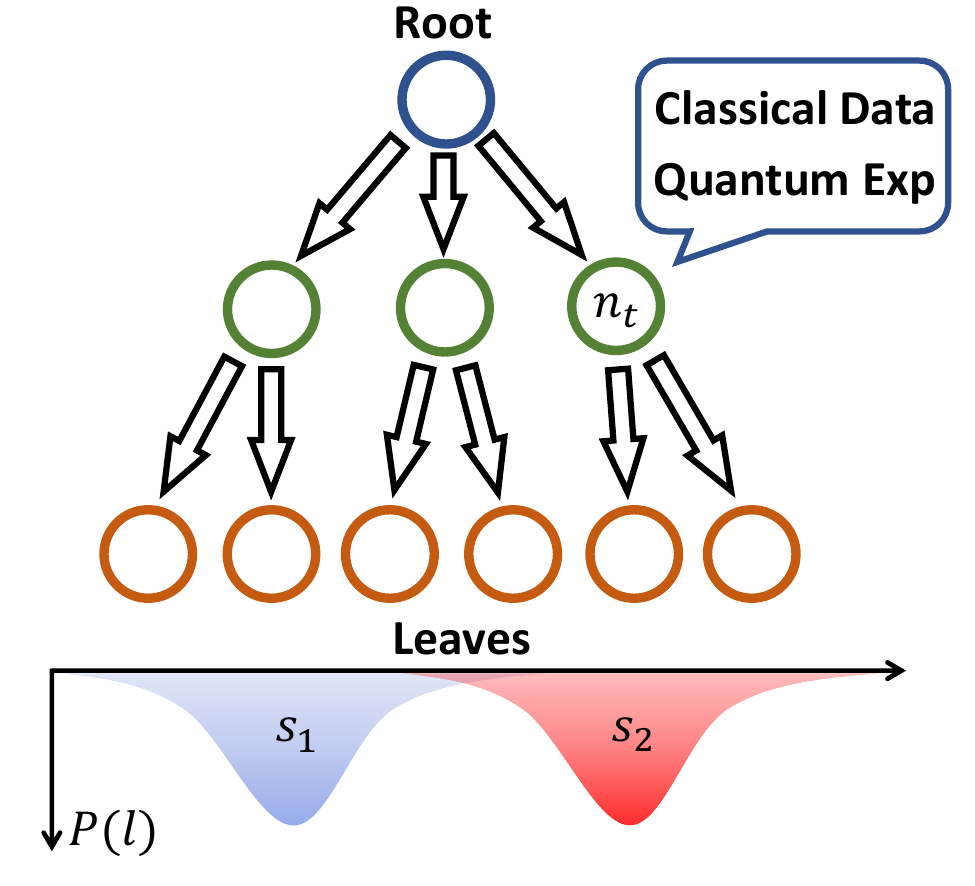}
\caption{Every state discrimination protocol can be represented as a decision tree, where $n$ and $l$ stand for the node and leaf, respectively.
$P(l)$ stands for the probability distribution on leaves, which is determined by the protocol and input state.
The $t$-th layer of the decision tree represents the $t$-th step of the protocol. 
In our setting, the experimenter has a classical memory that records the experiment data of the former steps and a quantum processor that can perform operations and measurements on quantum states. 
The protocol has a rule that decides how to perform quantum experiments based on the data recorded in the classical memory.  
At the end of the experiment, the experimenter will decide which set the target state comes from, depending on the leave the experimenter reaches. }
\label{fig:tree}
\end{figure}

To find the lower bound for this state discrimination task, we employ a decision tree to describe all the methods, including those depending on adaptive strategies \cite{Huang2022quantum,Chen2021exponential}, as shown in Fig.~\ref{fig:tree}.
Every node of the tree has all the experiment data of its child nodes and uses it to decide how to experiment within this step.
After many steps, different input states will lead to different probability distributions on the leaves.
If the deviation between two probability distributions on the leaves is large enough, we can discriminate two state sets with high probability.
We prove that, when only single-copy quantum operations are allowed in this decision tree, the depth of this tree should be at least $\Omega(d^{1/4})$ to make the distance of probability distributions caused by $s_1$ and $s_2$ a constant, which is also the sample complexity lower bound for single-copy EDP.

To prove the upper bound, we consider an EDP that satisfies the requirements of Theorem \ref{theorem:exponential_complexity} and derive its sample complexity.
This EDP is based on the purity criterion \cite{GUHNE2009detection}, which detects entanglement by measuring $C(\rho)=\Tr(\rho_{A}^2)-\Tr(\rho^2)$.
As states in $\pi_{d,1}^*$ are all pure states, the condition of $\Tr(\rho_{A}^2)<1$ can help us to detect the entanglement of $\rho$.
According to the properties of random pure states, if a state is sampled according to $\pi_{d,1}^*$, $\Tr(\rho_A^2)$ is either one or exponentially small \cite{page1993curve}, with probability $0.5$ for each scenario.
Thus, any protocol that accurately estimates $\Tr(\rho_A^2)$ at least satisfies the requirement for completeness, Eq.~\eqref{eq:completeness}.
Then, we need to find a purity measurement protocol that satisfies soundness with the lowest sample complexity.
Due to the restriction of single-copy quantum operations, we employ the randomized measurement protocol \cite{elben2019toolbox,Elben2023toolbox}, which is commonly employed for entanglement detection \cite{cieslinski2023analysing,imai2023randomized}.
In this protocol, one evolves the subsystem $A$ using many random unitaries, measures it in the computational basis, and computes the purity from these experiment data.
In Appendix~\ref{sec:upper_bound_proof}, we analytically show that $\mathcal{O}(d^{1/4})$ measurement times is enough to suppress the variance of the randomized measurement protocol to satisfy the soundness requirement, Eq.~\eqref{eq:soundness}, which concludes our proof for the upper bound.

There are some important consequences from Theorem \ref{theorem:exponential_complexity} and its proof.
Firstly, \emph{the optimal EDP with the lowest sample complexity does not necessarily correspond to the optimal EC with the smallest number of observables.}
As stated before, the optimal single-copy EDP is based on the randomized measurement protocol, which requires measurements on many bases.
If we judge entanglement detection protocols only regarding the number of observables, this randomized measurement EDP is much worse than the entanglement witness EDP we mentioned before, which only requires a single observable.
Secondly, a counter-intuitive fact from Theorem \ref{theorem:exponential_complexity} is that entanglement detection for pure states can be exponentially hard when considering sample complexity. 
However, such a task is well-acknowledged to be easy for EC and theoretical analysis.
Note that if considering practical situations with enough prior knowledge of the target state, we can largely avoid the exponential sample complexity.

In addition, Ref.~\cite{liu2022fundamental} has shown that quantum memory, or equivalently joint operations, brings advantages in constructing EC by exponentially saving the number of observables.
Using conclusions in this work, we can further conclude that quantum memory also brings exponential speedups for practical entanglement detection tasks.
This is because the SWAP test algorithm, which relies on quantum memory, estimates purity with constant sample complexity. 
We thus arrive at the following conclusion, which is also proved in Ref.~\cite{wei2023realizing} with another quantum algorithm.
\begin{corollary}
Quantum memory can bring exponential speedups in sample complexities for entanglement detection protocols.  
\end{corollary}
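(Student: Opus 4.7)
The plan is to obtain the corollary by pairing the $\Omega(d^{1/4})$ single-copy lower bound of Theorem~\ref{theorem:exponential_complexity} with a two-copy upper bound of $O(1)$ on the \emph{same} $\pi_{d,1}^*$ entanglement-detection task. Because $d$ is exponential in the number of qubits $n=\log_2 d$, an $\Omega(d^{1/4})$ single-copy cost set against an $O(1)$ two-copy cost is already an exponential-in-$n$ separation, and the two-copy protocol is the only place where coherent quantum memory enters.

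The upper bound I would use is the standard SWAP test on the $A$-registers of two independent copies of $\rho$: every shot consumes two copies, applies the $A$-subsystem SWAP, and measures its $\pm 1$ eigenvalue, thereby returning an unbiased estimator of $\Tr(\rho_A^2)$ with variance at most $1$. The EDP declares ``entangled'' iff the empirical mean $\widehat{P}_A$ from $T$ shots is below $1/2$. To check the postulates, I would split on the two branches of $\pi_{d,1}^*$. On the product branch $\rho=\rho_A\otimes\rho_B$ with $\rho_A,\rho_B$ pure, $\Tr(\rho_A^2)=1$ and every SWAP outcome is deterministically $+1$, so the probability of a false ``entangled'' verdict is zero. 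On the random-pure branch $\rho\sim\pi_{d,1}$, Page's lemma together with Lévy-type concentration give $\Tr(\rho_A^2)=O(1/\sqrt{d})$ with probability $1-o(1)$, and Hoeffding then shows that a constant $T$ suffices to drive $\widehat{P}_A$ below $1/2$ with high probability. Both completeness~\eqref{eq:completeness} and soundness~\eqref{eq:soundness} follow immediately: the entangled branch alone fires with probability $\Theta(1)$, and conditional on a positive verdict the product branch contributes zero weight, so $\Pr[\rho\in\mathrm{ENT}\,|\,\hat{C}(\rho)<0]\to 1$.

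I do not expect a real obstacle here, since both ingredients are essentially in hand; the care is in matching the exact completeness and soundness thresholds used in Theorem~\ref{theorem:exponential_complexity} so that the two bounds are being compared within the same EDP notion. The conceptual point to emphasize is that the SWAP test is the \emph{only} step in the construction that requires coherent access to two copies at once, and by Theorem~\ref{theorem:exponential_complexity} no single-copy adaptive strategy, regardless of how many observables it uses, can match it; removing quantum memory therefore provably re-inflates the sample complexity from $O(1)$ to $\Omega(d^{1/4})$, which is the exponential speedup claimed in the corollary.
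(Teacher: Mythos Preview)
Your proposal is correct and follows exactly the paper's own argument: pair the $\Omega(d^{1/4})$ single-copy lower bound of Theorem~\ref{theorem:exponential_complexity} with the SWAP test on the $A$-registers of two copies, which estimates $\Tr(\rho_A^2)$ with $O(1)$ samples and thereby solves the $\pi_{d,1}^*$ task with constant sample complexity. You supply more detail than the paper (the deterministic $+1$ outcome on the product branch, Page's result for the entangled branch, and the explicit check of completeness and soundness), but the underlying mechanism is identical.
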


\textbf{Discussion.}
We summarize the take-home messages of this work.
Firstly, in addition to the detection power, the only requirement for an EC, an effective EDP must also possess robustness against errors.
Secondly, certain entanglement detection tasks exist that are relatively easy for an EC, achieving constant detection probability with a small number of observables while proving to be exponentially challenging for any EDP in terms of sample complexity.
Thirdly, although an EDP is typically designed based on an EC, the optimal EDP may not necessarily correspond to the optimal EC, as their evaluation criteria differ.
Lastly, using quantum memory can significantly reduce the sample complexity of EDP, demonstrating the great potential of multi-copy entanglement detection protocols \cite{rico2024poly}.

In this study, we primarily focus on pure bipartite states. 
Our results, especially the exponential lower bounds, can be extended to mixed-state and multipartite scenarios, as discussed in Appendix~\ref{sec:generalization}. Specifically, for $\pi_{d,k}^*$ with $k>1$, we enhance the lower bound of sample complexity from $\Omega(d^{1/4})$ to $\Omega((dk)^{1/4})$. 
For multipartite entanglement detection, we introduce a new state distribution and establish a tight exponential lower bound on the complexity of multipartite EDP.

While our results are derived based on a specific state distribution, they represent an initial step towards systematically considering sample complexities of entanglement detection protocols. 
We hope that our work will inspire further exploration using various practical state distributions, ultimately reducing the cost of real-world entanglement detection experiments. 
Additionally, our main result for EDP complexities is derived under the restriction of single-copy operations. 
It would also be valuable to consider other types of operations, such as separable and local operations \cite{Harrow2023orthogonality}.

\textbf{Acknowledgement.}
The authors would like to thank Satoya Imai, Xiongfeng Ma, Otfried Gühne, Senrui Chen, Pei Zeng, Rui Zhang, Zihao Li, Pengyu Liu, Boyang Chen, Xueying Yang, and Rundi Lu for their insightful discussions and suggestions.
We express special thanks to Daniel Miller for his vital and generous help during the final stages of this work.
ZL acknowledges support from the National Natural Science Foundation of China Grant No. 12174216 and the Innovation Program for Quantum Science and Technology Grant No. 2021ZD0300804. 
FW is supported by BMSTC and ACZSP Grant No.~Z221100002722017.

%\bibliographystyle{apsrev4-2}
%\bibliography{BibEntComp}

%apsrev4-2.bst 2019-01-14 (MD) hand-edited version of apsrev4-1.bst
%Control: key (0)
%Control: author (8) initials jnrlst
%Control: editor formatted (1) identically to author
%Control: production of article title (0) allowed
%Control: page (0) single
%Control: year (1) truncated
%Control: production of eprint (0) enabled
%

\appendix

\onecolumngrid
\newpage
\setcounter{theorem}{0}
\setcounter{definition}{0}
\setcounter{proposition}{0}
\setcounter{corollary}{0}

\section{Notations and Auxiliary Lemmas}

We use $\mc{H}_d$ to represent the $d$-dimensional Hilbert space. $D(\mc{H}_d)$ represents the set of density matrices on $\mc{H}_d$. 
For convenience, sometimes we omit the Dirac notation and denote $\ketbra{\psi}{\psi}$ as $\psi$. 

Let $S_T$ be the symmetric group of order $T$. 
For distinct $a_1,\cdots,a_l$, we use the notation $(a_1\cdots a_l)$ to denote a cyclic permutation in $S_T$, which acts as $a_1\mapsto a_2$, $\cdots$, $a_{l-1}\mapsto a_l$, $a_l\mapsto a_1$. 
For $\sigma\in S_T$, define a unitary operator $W_\sigma$ acting on $\mc{H}_d^{\otimes T}$ by
\begin{equation}
W_\sigma=\sum_{i_1,\cdots, i_T=1}^{d}\ketbra{i_{\sigma^{-1}(1)}\cdots i_{\sigma^{-1}(T)}}{i_1\cdots i_T},
\end{equation}
where $\{\ket{i}\}_{i=1}^d$ is an orthonormal basis of $\mc{H}_d$.

For two discrete probability distributions $p=\{p_1,\cdots,p_N\}$ and $q=\{q_1,\cdots,q_N\}$, define the total variation between them by $\operatorname{TV}(p,q)=\frac{1}{2}\sum_{i=1}^N\abs{p_i-q_i}$.

For two systems $\mc{H}_1$ and $\mc{H}_2$, define the unravelling operation \cite{wood2011tensor} $\mathcal{V}_M:\mc{H}_1^{\otimes M}\otimes\mc{H}_2^{\otimes M}\rightarrow\left(\mc{H}_1\otimes\mc{H}_2\right)^{\otimes M}$ by 
\begin{equation}
\mc{V}_M(\ket{x_1}\otimes\cdots\otimes\ket{x_M})\otimes(\ket{y_1}\otimes\cdots\otimes\ket{y_M})=(\ket{x_1}\otimes\ket{y_1})\otimes\cdots\otimes(\ket{x_M}\otimes\ket{y_M}),
\end{equation}
which holds for all $\ket{x_i}\in\mc{H}_1$ and $\ket{y_i}\in\mc{H}_2$. More generally, for $K$ systems $\mc{H}_1,\cdots,\mc{H}_K$, define $\mathcal{V}_M^{(K)}:\mc{H}_1^{\otimes M}\otimes\cdots\otimes\mc{H}_K^{\otimes M}\rightarrow\left(\mc{H}_1\otimes\cdots\otimes\mc{H}_K\right)^{\otimes M}$ by 
\begin{equation}
\begin{aligned}
&\mc{V}_M^{(K)}\left[\left(\ket{x_1^{(1)}}\otimes\cdots\otimes\ket{x_M^{(1)}}\right)\otimes\cdots\otimes\left(\ket{x_1^{(K)}}\otimes\cdots\otimes\ket{x_M^{(K)}}\right)\right]\\
=&\left(\ket{x_1^{(1)}}\otimes\cdots\otimes\ket{x_1^{(K)}}\right)\otimes\cdots\otimes\left(\ket{x_M^{(1)}}\otimes\cdots\otimes\ket{x_M^{(K)}}\right),
\end{aligned}
\end{equation}
which holds for all $\ket{x_i^{(k)}}\in\mc{H}_k$, where $k=1,\cdots,K$ and $i=1,\cdots,M$.

\begin{lemma}[\cite{harrow2013church}]\label{lemma:expectation_to_permutation}
For Haar random pure state $\ket{\psi}\in\mc{H}_d$, we have
\begin{equation}
\underset{\psi\sim \operatorname{Haar}}{\mathbb{E}}\ketbra{\psi}{\psi}^{\otimes T}=\frac{1}{\left(d+T-1\right)\cdots\left(d+1\right)d}\sum_{\sigma\in S_{T}}W_{\sigma}.
\end{equation}
\end{lemma}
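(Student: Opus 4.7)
The plan is to identify $\mathbb{E}[\ketbra{\psi}{\psi}^{\otimes T}]$ with the normalized projector onto the symmetric subspace of $\mc{H}_d^{\otimes T}$, and then rewrite that projector via the group average of the $W_\sigma$'s over $S_T$. The argument splits cleanly into three short steps, each using a single piece of structure: unitary invariance of Haar measure, Schur's lemma on an irreducible representation, and a dimension count.

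First, I would exploit the left-invariance of the Haar measure: for any unitary $U\in U(d)$, the state $U\ket{\psi}$ is distributed identically to $\ket{\psi}$, so $U^{\otimes T}\,\mathbb{E}[\ketbra{\psi}{\psi}^{\otimes T}]\,(U^{\otimes T})^\dagger=\mathbb{E}[\ketbra{\psi}{\psi}^{\otimes T}]$ for every $U$. Moreover each summand $\ketbra{\psi}{\psi}^{\otimes T}$ is supported on the symmetric subspace $\mathrm{Sym}^T(\mc{H}_d)$, so the expectation is too. Thus the expectation is a positive operator, supported on $\mathrm{Sym}^T(\mc{H}_d)$, that commutes with the diagonal $U(d)$ action.

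Second, I would invoke Schur's lemma. The representation of $U(d)$ on $\mathrm{Sym}^T(\mc{H}_d)$ is irreducible, so any operator on $\mathrm{Sym}^T(\mc{H}_d)$ that commutes with this action is a scalar multiple of the identity on the subspace, i.e.\ a scalar multiple of the symmetric-subspace projector $\Pi_{\mathrm{sym}}$. Unit trace pins down the scalar: since $\mathrm{Tr}\,\Pi_{\mathrm{sym}}=\dim\mathrm{Sym}^T(\mc{H}_d)=\binom{d+T-1}{T}$, we obtain
\begin{equation}
\underset{\psi\sim\mathrm{Haar}}{\mathbb{E}}\ketbra{\psi}{\psi}^{\otimes T}=\frac{1}{\binom{d+T-1}{T}}\,\Pi_{\mathrm{sym}}.
\end{equation}

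Finally, I would use the standard identity $\Pi_{\mathrm{sym}}=\frac{1}{T!}\sum_{\sigma\in S_T}W_\sigma$, which follows because the right-hand side is self-adjoint, is idempotent as a group-averaging operator, and acts as the identity on symmetric tensors. Substituting and using $\binom{d+T-1}{T}=d(d+1)\cdots(d+T-1)/T!$ cancels the $T!$ and yields exactly the claimed formula. The only nontrivial ingredient is the irreducibility of $\mathrm{Sym}^T(\mc{H}_d)$ under $U(d)$ that justifies Schur's lemma; a self-contained alternative is to directly compute the moments via the Weingarten formula, but the symmetric-subspace route is shorter and matches the $W_\sigma$ notation already introduced.
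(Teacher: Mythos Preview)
Your proof is correct and is exactly the standard argument. Note, however, that the paper does not supply its own proof of this lemma: it is stated as an auxiliary result and attributed to \cite{harrow2013church}, where precisely the symmetric-subspace argument you outline (unitary invariance, Schur's lemma on the irreducible $U(d)$-module $\mathrm{Sym}^T(\mc{H}_d)$, and the group-average formula $\Pi_{\mathrm{sym}}=\frac{1}{T!}\sum_{\sigma}W_\sigma$) is given. So there is no discrepancy to report; your proposal simply fills in the proof that the paper delegates to the reference.
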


\section{Detection Power and Performance of SWAP Operator}\label{sec:SWAP_analysis}
The entanglement witness $S$ is an important example in the main context. 
In this section, we will first evaluate its detection power as an EC and then estimate its sample complexity as an EDP.

\begin{theorem}
Consider that $\rho$ is sampled according to $\pi_{d,1}^*$, the detection power of $S$ satisfies
\begin{equation}
\underset{\rho\sim\pi_{d,1}^*}{\mathrm{Pr}}\left[\Tr(S\rho)<0\right]=\frac{1}{2}I_{1/2}\left(\frac{d+\sqrt{d}}{2},\frac{d-\sqrt{d}}{2}\right),
\end{equation}
where $I_{\cdot}(\cdot,\cdot)$ is the regularized incomplete beta function.
\end{theorem}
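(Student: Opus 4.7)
The plan is to decompose the distribution $\pi_{d,1}^*$ into its two components and handle them separately. When $\rho = \rho_A \otimes \rho_B$ is a product of two Haar random pure states, a direct computation gives $\Tr(S(\rho_A\otimes\rho_B)) = \Tr(\rho_A\rho_B) = \abs{\braket{\alpha|\beta}}^2 \geq 0$, so the product-state branch contributes nothing to the event $\{\Tr(S\rho)<0\}$. Combined with the prefactor $1/2$ from the mixture, the entire probability reduces to $\tfrac{1}{2}\Pr_{\ket{\psi}\sim\text{Haar}(\mc{H}_d)}[\bra{\psi}S\ket{\psi}<0]$.

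Next, I will exploit the spectral structure of $S$. On $\mc{H}_A\otimes\mc{H}_B$ with $d_A=d_B=\sqrt{d}$, the SWAP operator has eigenvalues $\pm 1$ with eigenspaces being the symmetric and antisymmetric subspaces of dimensions
\begin{equation}
n_+ = \binom{\sqrt{d}+1}{2}=\frac{d+\sqrt{d}}{2},\qquad n_-=\binom{\sqrt{d}}{2}=\frac{d-\sqrt{d}}{2}.
\end{equation}
Let $P_+$ denote the projector onto the symmetric subspace. Writing $\ket{\psi} = P_+\ket{\psi} + (I-P_+)\ket{\psi}$, I obtain
\begin{equation}
\bra{\psi}S\ket{\psi} = \|P_+\ket{\psi}\|^2 - \|(I-P_+)\ket{\psi}\|^2 = 2\|P_+\ket{\psi}\|^2 - 1,
\end{equation}
so the event $\Tr(S\psi)<0$ is equivalent to $\|P_+\ket{\psi}\|^2 < 1/2$.

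Finally, I invoke the well-known distributional fact that for a Haar random unit vector $\ket{\psi}$ in $\mc{H}_d$ and an orthogonal projector $P$ of rank $r$, the squared overlap $\|P\ket{\psi}\|^2$ follows the Beta distribution $\text{Beta}(r,d-r)$; this can be derived either from invariance of the Haar measure under the unitary group or by writing $\ket{\psi}$ as a normalized complex Gaussian and using the independence of its coordinates. Applying this to $P_+$ with $r=n_+$ yields
\begin{equation}
\Pr\left[\|P_+\ket{\psi}\|^2 < \tfrac{1}{2}\right] = I_{1/2}(n_+,n_-) = I_{1/2}\!\left(\tfrac{d+\sqrt{d}}{2},\tfrac{d-\sqrt{d}}{2}\right),
\end{equation}
and multiplying by the mixture weight $1/2$ gives the claimed identity. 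The only subtle step is justifying the Beta distribution for the squared projection norm; a clean way is to express $P_+\ket{\psi}$ in an orthonormal basis of the symmetric subspace, note that these coordinates together with those of $(I-P_+)\ket{\psi}$ are just the components of a uniformly random point on the unit sphere in $\mathbb{C}^d$, and then appeal to the standard Dirichlet/Beta representation of squared norms of such coordinate groups. Everything else is an elementary consequence of the $S=\pm 1$ spectral decomposition.
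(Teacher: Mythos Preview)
Your proposal is correct and follows essentially the same approach as the paper: both split $\pi_{d,1}^*$ into its two branches, observe that the product branch never yields $\Tr(S\rho)<0$, use the $\pm1$ spectral decomposition of $S$ to reduce the Haar branch to the distribution of $\|P_+\ket{\psi}\|^2$, and identify this via the Gaussian/Dirichlet representation of a Haar random vector. The only cosmetic difference is that the paper phrases the last step through the ratio $\sum_{x\le n_+}|v_x|^2/\sum_{x>n_+}|v_x|^2\sim\beta'(n_+,n_-)$ and its CDF, whereas you invoke $\|P_+\ket{\psi}\|^2\sim\mathrm{Beta}(n_+,n_-)$ directly; these are equivalent formulations of the same computation.
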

\begin{proof}
We take an orthonormal basis $\left\{\big|\alpha_1\big\rangle,\cdots,\big|\alpha_{\frac{d+\sqrt{d}}{2}}\big\rangle\right\}$ for the $+1$ eigenspace of $S$, and an orthonormal basis $\left\{\big|\alpha_{\frac{d+\sqrt{d}}{2}+1}\big\rangle,\cdots,\big|\alpha_{d}\big\rangle\right\}$ for the $-1$ eigenspace of $S$.
For a random complex Gaussian vector $v\sim\mc{N}_{\mbb{C}}(0,\mbb{I}_d)$, take $\ket{\psi}=\sum_{x=1}^d\frac{v_x}{\norm{v}}\ket{\alpha_x}\in\mc{H}_d$, then $\psi\sim\pi_{d,1}$ \cite{Collins2015Random}. 
% \lzh{what is $I_d$}
Note that $\Tr(S\psi)=\sum_{x=1}^{\frac{d+\sqrt{d}}{2}}\frac{\abs{v_x}^2}{\norm{v}^2}-\sum_{\frac{d+\sqrt{d}}{2}+1}^{d}\frac{\abs{v_x}^2}{\norm{v}^2}$. We have
\begin{equation}
\begin{aligned}
\underset{\rho\sim\pi_{d,1}^*}{\mathrm{Pr}}\left[\Tr(S\rho)<0\right]=&\frac{1}{2}\underset{\psi\sim\pi_{d,1}}{\mathrm{Pr}}\left[\Tr(S\psi)<0\right]+\frac{1}{2}\underset{\psi_A,\psi_B\sim\pi_{\sqrt{d},1}}{\mathrm{Pr}}\left[\Tr(S\psi_A\otimes\psi_B)<0\right]\\
=&\frac{1}{2}\underset{\psi\sim\pi_{d,1}}{\mathrm{Pr}}\left[\Tr(S\psi)<0\right]\\
=&\frac{1}{2}\underset{v\sim\mc{N}_{\mbb{C}}(0,\mbb{I}_d)}{\mathrm{Pr}}\left[\sum_{x=1}^{\frac{d+\sqrt{d}}{2}}\abs{v_x}^2\Bigg/\sum_{\frac{d+\sqrt{d}}{2}+1}^{d}\abs{v_x}^2<1\right].
\end{aligned}
\end{equation}
For $v\sim\mc{N}_{\mbb{C}}(0,\mbb{I}_d)$, we have $\sum_{x=1}^{\frac{d+\sqrt{d}}{2}}\abs{v_x}^2\sim\Gamma\left(\frac{d+\sqrt{d}}{2},1\right)$ and $\sum_{\frac{d+\sqrt{d}}{2}+1}^{d}\abs{v_x}^2\sim\Gamma\left(\frac{d-\sqrt{d}}{2},1\right)$, thus $\sum_{x=1}^{\frac{d+\sqrt{d}}{2}}\abs{v_x}^2\Big/\sum_{\frac{d+\sqrt{d}}{2}+1}^{d}\abs{v_x}^2\sim\beta'\left(\frac{d+\sqrt{d}}{2},\frac{d-\sqrt{d}}{2}\right)$, which is the standard beta prime distribution with parameters $\frac{d+\sqrt{d}}{2}$ and $\frac{d-\sqrt{d}}{2}$. Therefore the random variable $\sum_{x=1}^{\frac{d+\sqrt{d}}{2}}\abs{v_x}^2\Big/\sum_{\frac{d+\sqrt{d}}{2}+1}^{d}\abs{v_x}^2$ has cumulative distribution function 
\begin{equation}
\underset{v\sim\mc{N}_{\mbb{C}}(0,\mbb{I}_d)}{\mathrm{Pr}}\left[\sum_{x=1}^{\frac{d+\sqrt{d}}{2}}\abs{v_x}^2\Bigg/\sum_{\frac{d+\sqrt{d}}{2}+1}^{d}\abs{v_x}^2<y\right]=I_{\frac{y}{1+y}}\left(\frac{d+\sqrt{d}}{2},\frac{d-\sqrt{d}}{2}\right),
\end{equation}
where $I_{\cdot}(\cdot,\cdot)$ is the regularized incomplete beta function. Take $y=1$, we have
\begin{equation}
\underset{\rho\sim\pi_{d,1}^*}{\mathrm{Pr}}\left[\Tr(S\rho)<0\right]=\frac{1}{2}I_{1/2}\left(\frac{d+\sqrt{d}}{2},\frac{d-\sqrt{d}}{2}\right).
\end{equation}
% \lzh{explain what is $y$}
\end{proof}

Now we briefly analyze the complexity of the EDP designed based on $S$.
Firstly, if we randomly select states from $\pi_{d,1}^*$, the average expectation value is exponentially small as
\begin{equation}
	\underset{\psi_{AB}\sim\pi_{d,1}}{\mathbb{E}}\Tr(S\psi_{AB})=\underset{\psi_A,\psi_B\sim\pi_{\sqrt{d},1}}{\mathbb{E}}\Tr[S(\psi_A\otimes\psi_B)]
	=\Tr\left(S\frac{\mathbb{I}_d}{d}\right)=\frac{1}{\sqrt{d}},
\end{equation}
where $\frac{\mathbb{I}_d}{d}$ is the maximally mixed state in $\mathcal{H}_d$.
Then, one can prove that is the state $\rho$ is sampled according to the distribution $\pi_{d,1}$, the variance of the expectation value is also exponentially small,
\begin{equation}
\mathrm{Var}_{\psi_{AB}\sim\pi_{d,1}}\left[\Tr(S\psi_{AB})\right]=\mathbb{E}_{\psi_{AB}\sim\pi_{d,1}}\Tr(S\psi_{AB})^2-\frac{1}{d}.
\end{equation}
Substituting the result in Lemma \ref{lemma:expectation_to_permutation}, we arrive at
\begin{equation}
\mathrm{Var}_{\psi_{AB}\sim\pi_{d,1}}\left[\Tr(S\psi_{AB})\right]=\frac{2d}{d(d+1)}-\frac{1}{d}\sim\frac{1}{d}.
\end{equation}
This means that, when we sample a state $\psi_{AB}$ with negative expectation value $\Tr(S\psi_{AB})<0$, the absolute value of $\Tr(S\psi_{AB})$ normally scales as $\frac{1}{\sqrt{d}}$.
One can prove that with a constant probability, the value of $\Tr(S\psi_{AB})$ is less than $-\frac{1}{\sqrt{d}}$.
Therefore, to satisfy the requirement of soundness, the EDP should have an estimation error approximately $\frac{1}{\sqrt{d}}$.
Combining with the property of SWAP operator mentioned in the main context, this requires the sample complexity of EDP to be approximately $\mathcal{O}(d)$.

\section{Sample Complexity for Pure Bipartite State EDP}\label{sec:bipartite_lower_bound}
\begin{theorem}\label{theorem:exponential_complexity_app}
Given multiple copies of a state $\rho$ that is sampled from $\pi_{d,1}^*$, the minimum number of copies required by a single-copy adaptive EDP with completeness, $\Pr_{\rho\sim\pi_{d,1}^*}\left[\hat{C}(\rho)<0\right]=\frac{1}{4}$, and soundness, $\Pr_{\rho\sim\pi_{d,1}^*}\left[\rho\in\operatorname{ENT}\Big|\hat{C}(\rho)<0\right]=\frac{5}{6}$, scales as $\Theta(d^{1/4})$.
\end{theorem}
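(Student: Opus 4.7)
The plan is to prove matching $\Omega(d^{1/4})$ and $O(d^{1/4})$ bounds separately.

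For the \emph{lower bound}, I would reduce to a two-hypothesis state discrimination task between the two halves of $\pi_{d,1}^*$: $s_1=\{\rho\sim\pi_{d,1}\}$ versus $s_2=\{\rho_A\otimes\rho_B:\rho_A,\rho_B\sim\pi_{\sqrt{d},1}\}$, with uniform prior. Any EDP meeting the completeness ($\tfrac{1}{4}$) and soundness ($\tfrac{5}{6}$) conditions can be used as a discriminator by declaring $s_1$ exactly when $\hat{C}(\rho)<0$; a short Bayes computation, using the fact that $s_2$ is separable by construction while $s_1$ is asymptotically entangled (Ref.~\cite{aubrun2012phase}), shows this discriminator succeeds with probability at least $2/3$, and hence the induced leaf distributions $P_{s_1}$ and $P_{s_2}$ satisfy $\operatorname{TV}(P_{s_1},P_{s_2})\geq 1/3-o(1)$. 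I then model any adaptive single-copy protocol as a depth-$T$ decision tree (Fig.~\ref{fig:tree}), in the style of Refs.~\cite{Huang2022quantum,Chen2021exponential}: a root-to-leaf branch $(M_1,\dots,M_T)$ contributes to the leaf probability according to $\operatorname{Tr}\bigl[(M_1\otimes\cdots\otimes M_T)\,\mathbb{E}_\rho\rho^{\otimes T}\bigr]$. Lemma~1 rewrites both $\mathbb{E}_{\rho\sim s_1}\rho^{\otimes T}$ and $\mathbb{E}_{\rho\sim s_2}\rho^{\otimes T}$ as normalized sums of permutation operators $W_\sigma$, the latter living on $\mathcal{H}_{\sqrt{d}}^{\otimes T}\otimes\mathcal{H}_{\sqrt{d}}^{\otimes T}$ and needing the unravelling map $\mathcal{V}_T$ to be compared with the former on $\mathcal{H}_d^{\otimes T}$. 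The bulk of the argument is to show that the branch-wise difference is controlled by $O(T/\sqrt{\sqrt{d}})=O(T/d^{1/4})$ after summing over leaves, because only the ``non-factorizing'' permutations of $S_T$ produce a mismatch and each suppresses the contribution by a power of $1/\sqrt{d}$. Le Cam's inequality then forces $T=\Omega(d^{1/4})$.

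For the \emph{upper bound}, I would exhibit a concrete EDP based on the purity criterion $C(\rho)=\operatorname{Tr}(\rho_A^2)-\operatorname{Tr}(\rho^2)$. On $\pi_{d,1}^*$ every sample is pure, so $\operatorname{Tr}(\rho^2)=1$ and the criterion reduces to checking whether $\operatorname{Tr}(\rho_A^2)<1$. Page's curve \cite{page1993curve} gives $\operatorname{Tr}(\rho_A^2)\approx 2/\sqrt{d}$ with overwhelming probability for $\rho\sim\pi_{d,1}$, while $s_2$ yields $\operatorname{Tr}(\rho_A^2)=1$ identically; thus the gap is $\Theta(1)$, and any estimator of $\operatorname{Tr}(\rho_A^2)$ with constant additive error immediately satisfies completeness, with soundness following from the asymptotic purity of $s_1$. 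For the estimator I would use the local randomized-measurement protocol of Refs.~\cite{elben2019toolbox,Elben2023toolbox} on subsystem $A$ (dimension $\sqrt{d}$): apply i.i.d.\ Haar-random unitaries $U_i$ to $\rho_A$, measure in the computational basis to obtain bitstrings $s_i$, and form the standard $U$-statistic purity estimator over pairs of shots. A second-moment calculation, using Lemma~1 now applied to the Haar average over the $U_i$'s, gives a variance of the form $O\!\bigl(\sqrt{\sqrt{d}}/N^2\bigr)$ plus lower-order terms, so $N=O(d^{1/4})$ suffices for constant-accuracy purity estimation and therefore for the soundness requirement in Eq.~\eqref{eq:soundness}.

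The main obstacle I expect is the lower bound, specifically pinning down the exponent $1/4$ rather than $1/2$. The subtlety is that $\mathbb{E}_{\rho\sim s_2}\rho^{\otimes T}$ factorizes as $(\mathbb{E}_{\psi_A}\psi_A^{\otimes T})\otimes(\mathbb{E}_{\psi_B}\psi_B^{\otimes T})$ across $A$ and $B$, whereas $\mathbb{E}_{\rho\sim s_1}\rho^{\otimes T}$ couples them; the two are sums of permutations on \emph{different} dimensions ($d$ vs.\ two copies of $\sqrt{d}$), and only the ``cross-permutations'' which are nontrivial on the $A$--$B$ bipartition produce a nonvanishing contribution to $\operatorname{TV}(P_{s_1},P_{s_2})$. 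Careful combinatorial bookkeeping of these cross-permutations via $\mathcal{V}_T$, together with the fact that each nontrivial cycle costs a factor $1/\sqrt{d}$, is where the $T/d^{1/4}$ rate should arise, and this is the technical core of the argument.
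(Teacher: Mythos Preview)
Your overall architecture matches the paper: the same reduction to discriminating $s_1$ vs.\ $s_2$ with success $2/3$, the adaptive decision-tree model, Le~Cam for the lower bound, and the purity-via-randomized-measurements estimator for the upper bound. The upper bound is essentially identical (your variance expression should read $O(\sqrt{d}/N^2)$, not $O(\sqrt{\sqrt{d}}/N^2)$, but your conclusion $N=O(d^{1/4})$ is the right one).

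The meaningful divergence is in how you extract $d^{1/4}$ in the lower bound. You propose to compare $\mathbb{E}_{s_1}\rho^{\otimes T}$ and $\mathbb{E}_{s_2}\rho^{\otimes T}$ \emph{directly}, bounding the TV by tracking ``cross-permutations'' and arguing each nontrivial cycle suppresses by $1/\sqrt{d}$. The paper does \emph{not} do this. Instead it pivots through the maximally mixed state: it bounds $\operatorname{TV}(\mathbb{E}_{s_i}p,\,p_{\mathbb{I}_d/d})$ separately for $i=1,2$ and combines by triangle inequality. The point of the pivot is that the one-sided formula $\operatorname{TV}(q,p)=\sum_l p(l)\max\{0,1-q(l)/p(l)\}$ reduces the task to a uniform \emph{lower} bound on the ratio $q(l)/p(l)$, which follows from the positivity statements $\langle\Phi|\sum_\sigma W_\sigma|\Phi\rangle\ge 1$ and (the new ingredient) $\langle\Phi|\mathcal{V}_T(\sum_{\sigma_1}W_{\sigma_1})\otimes(\sum_{\sigma_2}W_{\sigma_2})\mathcal{V}_T^\dagger|\Phi\rangle\ge 1$. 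The $d^{1/4}$ then falls out of the normalization: the $s_2$ ratio is at least $\bigl(1+\tfrac{T-1}{\sqrt{d}}\bigr)^{-2(T-1)}\approx e^{-2(T-1)^2/\sqrt{d}}$, and requiring this to drop below a constant forces $(T-1)^2\gtrsim\sqrt{d}$.

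Your direct route would have to control $\sum_l|P_{s_1}(l)-P_{s_2}(l)|$, a signed sum of differences of two permutation expansions with different term counts ($T!$ vs.\ $(T!)^2$) and different normalizations; there is no obvious cancellation that isolates ``cross-permutations,'' and the heuristic ``one factor of $1/\sqrt{d}$ per cycle'' does not by itself produce $T^2/\sqrt{d}$. This is not fatal, but it is a real gap: the maximally-mixed pivot plus the product-positivity lemma is the idea you are missing, and it is exactly what converts a delicate signed comparison into two clean one-sided bounds.
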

\begin{proof}

To prove the lower bound on the sample complexity of EDP, we first find an equivalent state discrimination task and derive the lower bound for that task.
Notice that given an EDP satisfying the requirement in Theorem~\ref{theorem:exponential_complexity_app}, it satisfies
\begin{equation}
\begin{aligned}
\Pr_{\rho\sim\pi_{d,1}^*}\left[\rho\in\operatorname{ENT},\hat{C}(\rho)<0\right]=\Pr_{\rho\sim\pi_{d,1}^*}\left[\rho\in\operatorname{ENT}\Big|\hat{C}(\rho)<0\right]\Pr_{\rho\sim\pi_{d,1}^*}\left[\hat{C}(\rho)<0\right]=\frac{5}{24}
% =\left(\frac{1}{2}+s\right)c
\end{aligned}
\end{equation}
and
\begin{equation}
\begin{aligned}
&\Pr_{\rho\sim\pi_{d,1}^*}\left[\rho\in\operatorname{SEP},\hat{C}(\rho)\ge 0\right]=\Pr_{\rho\sim\pi_{d,1}^*}\left[\rho\in\operatorname{SEP}\right]-\Pr_{\rho\sim\pi_{d,1}^*}\left[\rho\in\operatorname{SEP},\hat{C}(\rho)< 0\right]\\
=&\Pr_{\rho\sim\pi_{d,1}^*}\left[\rho\in\operatorname{SEP}\right]-\Pr_{\rho\sim\pi_{d,1}^*}\left[\hat{C}(\rho)< 0\right]+\Pr_{\rho\sim\pi_{d,1}^*}\left[\rho\in\operatorname{ENT},\hat{C}(\rho)< 0\right]=\frac{11}{24}.
% =\frac{1}{2}-c+\left(\frac{1}{2}+s\right)c.
\end{aligned}
\end{equation}
Consider the following two pure state sets with distributions defined as:
\begin{enumerate}
\item $s_1$: $\rho=\ketbra{\psi}{\psi}$, where $\ket{\psi}\in \mathcal{H}_d$ is a Haar random pure state.
\item $s_2$: $\rho=\ketbra{\psi_A}{\psi_A}\otimes\ketbra{\psi_B}{\psi_B}$, where $\ket{\psi_A},\ket{\psi_B}\in \mathcal{H}_{\sqrt{d}}$ are Haar random pure states.
\end{enumerate}
Suppose a referee tosses a fair coin to select a set, and then chooses a state $\rho$ from the selected set according to the distribution shown above. 
Then the referee gives $T$ copies of $\rho$ to us. 
Our task is to decide which set $\rho$ is from, with $T$ copies of $\rho$.
Notice that state sampled from $s_1$ is entangled with probability $1$ and state sampled from $s_2$ is entangled with probability $0$.
Thus, if we can use these $T$ copies of $\rho$ to achieve the EDP with requirements in Theorem~\ref{theorem:exponential_complexity_app}, we can solve this state discrimination task with success probability of
\begin{equation}
\begin{aligned}
\Pr_{\rho\sim\pi_{d,1}^*}\left[\rho\in s_1,\hat{C}(\rho)<0\right]+\Pr_{\rho\sim\pi_{d,1}^*}\left[\rho\in s_2,\hat{C}(\rho)\ge 0\right]=\frac{2}{3}.
\end{aligned}
\end{equation}
So, it suffices to prove that the sample complexity for solving this state discrimination task has a lower bound $\Omega(d^{1/4})$ for all single-copy adaptive strategies.

\begin{figure}
    \centering
    \includegraphics[width=0.5\textwidth]{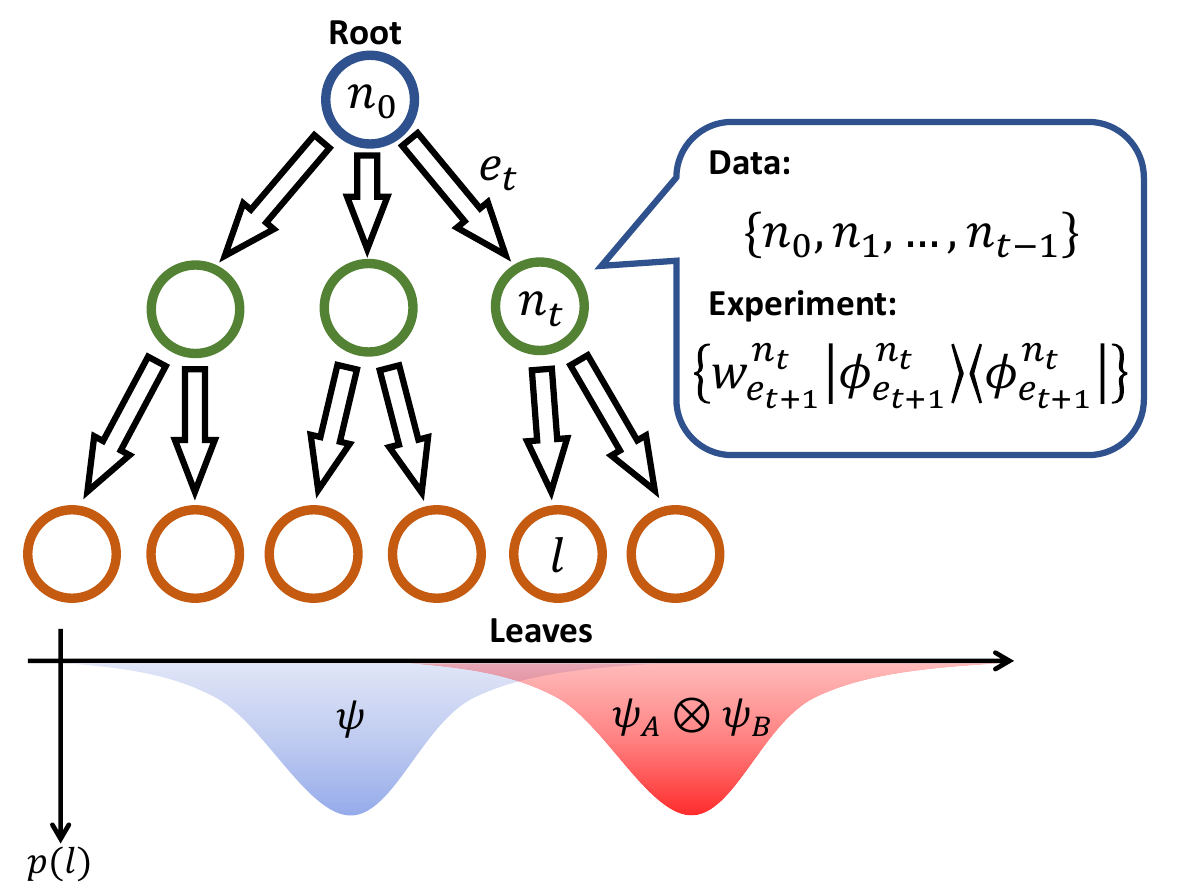}
    \caption{Decision tree for solving the state discrimination task.}
    \label{fig:tree_app}
\end{figure}

For any single-copy adaptive state discrimination protocol and a given state $\rho$, consider the decision tree associated with them, Fig.~\ref{fig:tree_app}. 
We use $l$ to label the leaves of the tree. 
Let $p_{\rho}$ be a probability distribution on the leaves, such that $p_{\rho}(l)$ equals the probability of arriving at the leaf $l$.
As this decision tree covers all the single-copy adaptive protocols to solve the state discrimination tasks, and the success probability depends on the total variation between two probability distributions on leaves, $p_{\psi}(l)$ and $p_{\psi_A\otimes\psi_B}(l)$.
Thus, to find the lower bound of this task, we need to derive the lowest depth that allows the total variation between leaves probability distributions a constant,
\begin{equation}
\operatorname{TV}\left(\underset{\psi\sim \operatorname{Haar}}{\mathbb{E}}p_{\psi},\underset{\psi_A,\psi_B}{\mathbb{E}}p_{\psi_A\otimes\psi_B}\right)\ge\frac{1}{3}.
\end{equation}
To benefit our derivation, we will first calculate the total variations between these two state distributions with the distribution caused by the maximally mixed state $p_{\mbb{I}_d/d}$ and then use the triangle inequality to bound the target total variation.

Since rank-$1$ POVM can simulate arbitrary POVM \cite{Chen2021exponential}, we can assume that all measurements in the decision tree are rank-$1$ POVM.
For a leaf $l$ in the decision tree, there is a unique path from the root node $n_0$ to the leaf node $n_T=l$. 
Denote the nodes on the path by $n_0,\cdots,n_T$ and the edges between them by $e_1,\cdots,e_T$. 
For $t=1,\cdots, T$, denote the rank$-1$ POVM operator associated to $e_{t}$ by $w_{e_{t}}^{n_{t-1}}\ketbra{\phi_{e_{t}}^{n_{t-1}}}{\phi_{e_{t}}^{n_{t-1}}}$, with $w_{e_{t}}^{n_{t-1}}>0$ satisfying the normalization condition $\sum_{e}w_{e}^{n_{t-1}}=d$, where the summation is calculated over all the edges connecting node $n_{t-1}$ with its child nodes. 
To calculate the total variation between the distribution caused by $s_1$ and the maximally mixed state, we calculate the following quantity:

\begin{equation}
\begin{aligned}
\frac{\underset{\psi\sim \operatorname{Haar}}{\mathbb{E}}p_{\psi}(l)}{p_{\mathbb{I}_d/d}(l)}=&\frac{1}{\prod_{t=1}^Tw_{e_t}^{n_{t-1}}\frac{1}{d}}\underset{\psi\sim \operatorname{Haar}}{\mathbb{E}}\prod_{t=1}^{T}w_{e_t}^{n_{t-1}}\braket{\phi_{e_t}^{n_{t-1}}}{\psi}\braket{\psi}{\phi_{e_t}^{n_{t-1}}}\\
=&d^T\underset{\psi\sim \operatorname{Haar}}{\mathbb{E}}\prod_{t=1}^{T}\braket{\phi_{e_t}^{n_{t-1}}}{\psi}\braket{\psi}{\phi_{e_t}^{n_{t-1}}}\\
=&d^T\bra{\Phi}\left(\underset{\psi\sim \operatorname{Haar}}{\mathbb{E}}\ketbra{\psi}{\psi}^{\otimes T}\right)\ket{\Phi}\\
=&\frac{d^T}{\left(d+T-1\right)\cdots\left(d+1\right)d}\bra{\Phi}\left(\sum_{\sigma\in S_T}W_{\sigma}\right)\ket{\Phi},
\end{aligned}
\end{equation}
where $\ket{\Phi}=\otimes_{t=1}^T\ket{\phi_{e_t}^{n_{t-1}}}$, and the last equality is given by Lemma~\ref{lemma:expectation_to_permutation}. With the fact $\bra{\Phi}\left(\sum_{\sigma\in S_T}W_{\sigma}\right)\ket{\Phi}\geq 1$ \cite{Chen2021exponential}, we have
\begin{equation}
\begin{aligned}
\frac{\underset{\psi\sim \operatorname{Haar}}{\mathbb{E}}p_{\psi}(l)}{p_{\mathbb{I}_d/d}(l)}\geq&\frac{d^T}{\left(d+T-1\right)\cdots\left(d+1\right)d}=\frac{1}{\left(1+\frac{T-1}{d}\right)\cdots\left(1+\frac{1}{d}\right)1}\geq\left(1+\frac{T-1}{d}\right)^{-(T-1)}.
\end{aligned}
\end{equation}
We can use it to estimate the total variation as
\begin{equation}\label{eq:TV_Epsi_I}
\begin{aligned}
\operatorname{TV}\left(\underset{\psi\sim \operatorname{Haar}}{\mathbb{E}}p_{\psi},p_{\mathbb{I}_d/d}\right)=\sum_{l}p_{\mathbb{I}_d/d}(l)\max\left\{0,1-\frac{\underset{\psi\sim \operatorname{Haar}}{\mathbb{E}}p_{\psi}(l)}{p_{\mathbb{I}_d/d}(l)}\right\}\leq1-\left(1+\frac{T-1}{d}\right)^{-(T-1)}.
\end{aligned}
\end{equation}

On the other hand, we have
\begin{equation}
\begin{aligned}
\frac{\underset{\psi_A,\psi_B}{\mathbb{E}}p_{\psi_A\otimes\psi_B}(l)}{p_{\mathbb{I}_d/d}(l)}&=\frac{1}{\prod_{t=1}^T\left(w_{e_t}^{n_{t-1}}\frac{1}{d}\right)}\underset{\psi_A,\psi_B}{\mathbb{E}}\prod_{t=1}^{T}w_{e_t}^{n_{t-1}}\braket{\phi_{e_t}^{n_{t-1}}}{\psi_A\psi_B}\braket{\psi_A\psi_B}{\phi_{e_t}^{n_{t-1}}}\\
&=d^T\underset{\psi_A,\psi_B}{\mathbb{E}}\prod_{i=1}^{T}\braket{\phi_{e_t}^{n_{t-1}}}{\psi_A\psi_B}\braket{\psi_A\psi_B}{\phi_{e_t}^{n_{t-1}}}\\
&=d^T\bra{\Phi}\left(\underset{\psi_A,\psi_B}{\mathbb{E}}(\ketbra{\psi_A}{\psi_A}\otimes\ketbra{\psi_B}{\psi_B})^{\otimes T}\right)\ket{\Phi}\\
&=d^T\bra{\Phi}\mc{V}_T\left(\underset{\psi_A}{\mathbb{E}}\ketbra{\psi_A}{\psi_A}^{\otimes T}\otimes\underset{\psi_B}{\mathbb{E}}\ketbra{\psi_B}{\psi_B}^{\otimes T}\right)\mathcal{V}_T^{\dagger}\ket{\Phi}\\
&=\left(\frac{\left(d^{1/2}\right)^T}{\left(d^{1/2}+T-1\right)\cdots\left(d^{1/2}+1\right)d^{1/2}}\right)^2\bra{\Phi}\mathcal{V}_T\left(\sum_{\sigma_1\in S_T}W_{\sigma_1}\right)\otimes\left(\sum_{\sigma_2\in S_T}W_{\sigma_2}\right)\mathcal{V}_T^{\dagger}\ket{\Phi},
\end{aligned}
\end{equation}
where the last equality can be obtained by Lemma~\ref{lemma:expectation_to_permutation}. By Lemma~\ref{lemma:pure_product} below, we have
\begin{equation}
\begin{aligned}
\frac{\underset{\psi_A,\psi_B}{\mathbb{E}}p_{\psi_A\otimes\psi_B}(l)}{p_{\mathbb{I}_d/d}(l)}\geq&\left(\frac{\left(d^{1/2}\right)^T}{\left(d^{1/2}+T-1\right)\cdots\left(d^{1/2}+1\right)d^{1/2}}\right)^2\\
=&\left(\frac{1}{\left(1+\frac{T-1}{d^{1/2}}\right)\cdots\left(1+\frac{1}{d^{1/2}}\right)1}\right)^2\geq\left(1+\frac{T-1}{d^{1/2}}\right)^{-2(T-1)}.
\end{aligned}
\end{equation}
We can estimate the total variation as
\begin{equation}
\begin{aligned}
\operatorname{TV}\left(\underset{\psi_A,\psi_B}{\mathbb{E}}p_{\psi_A\otimes\psi_B},p_{\mathbb{I}_d/d}\right)=&\sum_{l}p_{\mathbb{I}_d/d}(l)\max\left\{0,1-\frac{\underset{\psi_A,\psi_B}{\mathbb{E}}p_{\psi_A\otimes\psi_B}(l)}{p_{\mathbb{I}_d/d}(l)}\right\}\leq1-\left(1+\frac{T-1}{d^{1/2}}\right)^{-2(T-1)}.
\end{aligned}
\end{equation}
With triangle inequality, we have
\begin{equation}\label{eq:TV1_UpperBound}
\begin{aligned}
\operatorname{TV}\left(\underset{\psi\sim \operatorname{Haar}}{\mathbb{E}}p_{\psi},\underset{\psi_A,\psi_B}{\mathbb{E}}p_{\psi_A\otimes\psi_B}\right)\leq&\operatorname{TV}\left(\underset{\psi\sim \operatorname{Haar}}{\mathbb{E}}p_{\psi},p_{\mathbb{I}_d/d}\right)+\operatorname{TV}\left(\underset{\psi_A,\psi_B}{\mathbb{E}}p_{\psi_A\otimes\psi_B},p_{\mathbb{I}_d/d}\right)\\
\leq&2-\left(1+\frac{T-1}{d}\right)^{-(T-1)}-\left(1+\frac{T-1}{d^{1/2}}\right)^{-2(T-1)}.
\end{aligned}
\end{equation}
By Le Cam’s two-point method \cite{Yu1997LeCam,Huang2022quantum}, the success probability for any procedure to distinguish between the two sets is upper bounded by
\begin{equation}
\frac{1}{2}+\frac{1}{2}\operatorname{TV}\left(\underset{\psi\sim \operatorname{Haar}}{\mathbb{E}}p_{\psi},\underset{\psi_A,\psi_B}{\mathbb{E}}p_{\psi_A\otimes\psi_B}\right).
\end{equation}
Combined with Eq.~\eqref{eq:TV1_UpperBound}, to achieve a success probability of at least $2/3$, it must hold that
\begin{equation}
\begin{aligned}
\frac{2}{3}\leq\frac{1}{2}+\frac{1}{2}\operatorname{TV}\left(\underset{\psi\sim \operatorname{Haar}}{\mathbb{E}}p_{\psi},\underset{\psi_A,\psi_B}{\mathbb{E}}p_{\psi_A\otimes\psi_B}\right)\leq&\frac{3}{2}-\frac{1}{2}\left(1+\frac{T-1}{d}\right)^{-(T-1)}-\frac{1}{2}\left(1+\frac{T-1}{d^{1/2}}\right)^{-2(T-1)}\\
\leq&\frac{3}{2}-\left(1+\frac{T-1}{d^{1/2}}\right)^{-2(T-1)},
\end{aligned}
\end{equation}
which implies
\begin{equation}
\left(1+\frac{T-1}{d^{1/2}}\right)^{-2(T-1)}\leq\frac{5}{6}.
\end{equation}
Thus we must have
\begin{equation}
T\geq\sqrt{\frac{1}{2}\log(\frac{6}{5})}d^{1/4}+1=\Omega\left(d^{1/4}\right).
\end{equation}

We leave the proof of the upper bound in Appendix~\ref{sec:upper_bound_proof}.
\end{proof}

\begin{lemma}\label{lemma:pure_product}
For any product pure state $\ket{\Phi}=\ket{\phi_1}\otimes\cdots\otimes\ket{\phi_T}$ in $\mathcal{H}_d^{\otimes T}$, we have 
\begin{equation}
\bra{\Phi}\mathcal{V}_T\left(\sum_{\sigma_1\in S_T}W_{\sigma_1}\right)\otimes\left(\sum_{\sigma_2\in S_T}W_{\sigma_2}\right)\mathcal{V}_T^{\dagger}\ket{\Phi}\geq 1,
\end{equation}
where $W_{\sigma_1}$ and $W_{\sigma_2}$ both act on the space of $\mathcal{H}_{\sqrt{d}}^{\otimes T}$
\end{lemma}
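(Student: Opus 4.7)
The plan is to convert the stated inequality into a Frobenius-norm bound on the symmetric-subspace projection of $\bigotimes_t M^{(t)}$, and then to prove that bound using a Gaussian/Wick representation together with two applications of Marcus's permanent inequality for Hermitian positive semidefinite matrices.

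First, I will vectorize. Writing $\ket{\phi_t}=\sum_{ij} M^{(t)}_{ij}\ket{ij}_{AB}$ with $\|M^{(t)}\|_F=1$, a direct index computation gives $\mathcal{V}_T^\dagger \bigotimes_t \ket{\phi_t}=\operatorname{vec}\!\bigl(\bigotimes_t M^{(t)}\bigr)$ in $\mathcal{H}_{\sqrt{d}}^{\otimes T}\otimes \mathcal{H}_{\sqrt{d}}^{\otimes T}$. Using $\sum_\sigma W_\sigma=T!\,P_{\mathrm{sym}}$, the identity $(A\otimes B)\operatorname{vec}(X)=\operatorname{vec}(AXB^\top)$, and $P_{\mathrm{sym}}^\top=P_{\mathrm{sym}}$, the LHS of the lemma equals $(T!)^2\,\bigl\|P_{\mathrm{sym}}(\bigotimes_t M^{(t)})P_{\mathrm{sym}}\bigr\|_F^{2}$, so it suffices to show $\|P_{\mathrm{sym}}(\bigotimes_t M^{(t)})P_{\mathrm{sym}}\|_F\ge 1/T!$.

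Next, I will invoke the Gaussian representation $T!\,P_{\mathrm{sym}}=\mathbb{E}_g\ket{g}\bra{g}^{\otimes T}$ for $g\sim\mathcal{CN}(0,I_{\sqrt d})$, which is a direct consequence of Wick's theorem. This rewrites the target as $(T!)^2\,\|P_{\mathrm{sym}}X P_{\mathrm{sym}}\|_F^2=\mathbb{E}_{g,h}|\bra{g}^{\otimes T}X\ket{h}^{\otimes T}|^2=\mathbb{E}_{g,h}\prod_t|\bra{g}M^{(t)}\ket{h}|^2$, so the remaining task is to show $\mathbb{E}_{g,h}\prod_t|\bra{g}M^{(t)}\ket{h}|^2\ge 1$. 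Conditional on $h$, the tuple $\{\bra{g}M^{(t)}\ket{h}\}_t$ is a centered complex Gaussian vector in $g$ with PSD covariance $\Sigma(h)_{st}=\bra{h}M^{(s)\dagger}M^{(t)}\ket{h}$; Wick's theorem then gives $\mathbb{E}_g\prod_t|\bra{g}M^{(t)}\ket{h}|^2=\operatorname{perm}\Sigma(h)$, and Marcus's inequality yields $\operatorname{perm}\Sigma(h)\ge \prod_t\Sigma(h)_{tt}=\prod_t\|M^{(t)}h\|^2$.

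Finally, I will repeat the same Wick$+$Marcus idea one level down. Expanding $\|M^{(t)}h\|^2=\sum_{k_t}|(M^{(t)}h)_{k_t}|^2$ and fixing $\vec{k}$, the tuple $\{(M^{(t)}h)_{k_t}\}_t$ is jointly Gaussian in $h$ with PSD covariance $\widetilde{\Sigma}_{st}=\sum_j M^{(s)}_{k_s,j}\overline{M^{(t)}_{k_t,j}}$, so Wick$+$Marcus give $\mathbb{E}_h\prod_t|(M^{(t)}h)_{k_t}|^2\ge \prod_t\sum_j|M^{(t)}_{k_t,j}|^2$; summing over $\vec{k}$ produces $\prod_t\|M^{(t)}\|_F^2=1$. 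Combining with the previous step gives the desired bound $\mathbb{E}_{g,h}\prod_t|\bra{g}M^{(t)}\ket{h}|^2\ge 1$. The main obstacle I expect is the index bookkeeping in the first step, namely threading $\mathcal{V}_T$, the transpose convention in the $\operatorname{vec}$ identity, and the $W_\sigma$ convention to land cleanly on the Frobenius-norm reformulation; once that reduction is in hand, the two applications of Wick$+$Marcus are essentially mechanical.
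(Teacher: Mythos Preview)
Your proposal is correct and takes a genuinely different route from the paper's.

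The paper proves the lemma by induction on $T$. For the inductive step it inserts factors of $\mathbb{I}\otimes\widetilde{\Pi}$ (with $\widetilde{\Pi}$ the symmetric projector on $\mathcal{H}_{\sqrt d}^{\otimes(T-1)}$) on both sides of each $\sum_\sigma W_\sigma$, then splits each sum according to whether $\sigma(1)=1$. The piece with $\sigma_1(1)=\sigma_2(1)=1$ factors as $\braket{\phi_1}{\phi_1}$ times the $(T-1)$-copy expression on $\ket{\phi_2}\otimes\cdots\otimes\ket{\phi_T}$, which is $\geq 1$ by the inductive hypothesis. The three remaining cross terms are shown to be nonnegative term-by-term via tensor-network diagrams: after using $\widetilde{\Pi}W_{\widetilde\sigma}\widetilde{\Pi}=\widetilde{\Pi}$ to collapse, each diagram is exhibited as the inner product of an unnormalized vector with itself.

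Your route instead reduces the LHS to $(T!)^2\|P_{\mathrm{sym}}(\bigotimes_t M^{(t)})P_{\mathrm{sym}}\|_F^2$ via vectorization, then uses the Gaussian identity $T!\,P_{\mathrm{sym}}=\mathbb{E}_g(\ket{g}\bra{g})^{\otimes T}$ to rewrite this as $\mathbb{E}_{g,h}\prod_t|\bra{g}M^{(t)}\ket{h}|^2$, and finishes with two rounds of Wick's theorem plus Marcus's permanent inequality $\operatorname{perm}(A)\geq\prod_i A_{ii}$ for Hermitian PSD $A$. All the steps check out; in particular $P_{\mathrm{sym}}$ is real in the computational basis so the transpose issue in the $\operatorname{vec}$ identity is harmless, and the Gaussian vectors you form are circularly symmetric so Wick applies in the stated form.

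What each approach buys: the paper's argument is entirely self-contained and its inductive/diagrammatic structure is what the authors reuse verbatim for the mixed-state and $K$-partite generalizations later in the appendix. Your argument is shorter and avoids pictures, at the price of importing Marcus's inequality as a black box; it also generalizes to $K$ parties simply by iterating the Wick$+$Marcus step $K$ times rather than twice.
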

\begin{proof}
We prove by induction on $T$. When $T=1$, we have
\begin{equation}
\bra{\phi_1}\mathcal{V}_1\left(\mathbb{I}_{d^{1/2}}\otimes \mathbb{I}_{d^{1/2}}\right)\mathcal{V}_1^{\dagger}\ket{\phi_1}=\braket{\phi_1}{\phi_1}=1.
\end{equation}
Now suppose the statement holds for $T-1$. Denote $\Pi=\frac{1}{T!}\sum_{\sigma\in S_T}W_{\sigma}$ be the projector to the symmetric subspace in of $\mathcal{H}_{\sqrt{d}}^{\otimes T}$, $\widetilde{\Pi}=\frac{1}{(T-1)!}\sum_{\widetilde{\sigma}\in S_{T-1}}W_{\widetilde{\sigma}}$ be the projector to the symmetric subspace in of $\mathcal{H}_{\sqrt{d}}^{\otimes (T-1)}$. We have $\Pi\left(\mbb{I}\otimes\widetilde{\Pi}\right)=\left(\mbb{I}\otimes\widetilde{\Pi}\right)\Pi=\Pi$ and $\widetilde{\Pi}W_{\widetilde{\sigma}}=W_{\widetilde{\sigma}}\widetilde{\Pi}=\widetilde{\Pi},\forall \widetilde{\sigma}\in S_{T-1}$. Note that
\begin{equation}\label{eq:induction_pure_4terms}
\begin{aligned}
&\bra{\Phi}\mathcal{V}_T\left(\sum_{\sigma_1\in S_T}W_{\sigma_1}\right)\otimes\left(\sum_{\sigma_2\in S_T}W_{\sigma_2}\right)\mathcal{V}_T^{\dagger}\ket{\Phi}\\
=&\bra{\Phi}\mathcal{V}_T\left(\left(\mbb{I}\otimes\widetilde{\Pi}\right)\sum_{\sigma_1\in S_T}W_{\sigma_1}\left(\mbb{I}\otimes\widetilde{\Pi}\right)\right)\otimes\left(\left(\mbb{I}\otimes\widetilde{\Pi}\right)\sum_{\sigma_2\in S_T}W_{\sigma_2}\left(\mbb{I}\otimes\widetilde{\Pi}\right)\right)\mathcal{V}_T^{\dagger}\ket{\Phi}\\
=&\bra{\Phi}\mathcal{V}_T\left(\left(\mbb{I}\otimes\widetilde{\Pi}\right)\sum_{\sigma_1(1)=1}W_{\sigma_1}\left(\mbb{I}\otimes\widetilde{\Pi}\right)\right)\otimes\left(\left(\mbb{I}\otimes\widetilde{\Pi}\right)\sum_{\sigma_2(1)=1}W_{\sigma_2}\left(\mbb{I}\otimes\widetilde{\Pi}\right)\right)\mathcal{V}_T^{\dagger}\ket{\Phi}\\
&+\bra{\Phi}\mathcal{V}_T\left(\left(\mbb{I}\otimes\widetilde{\Pi}\right)\sum_{\sigma_1(1)\neq1}W_{\sigma_1}\left(\mbb{I}\otimes\widetilde{\Pi}\right)\right)\otimes\left(\left(\mbb{I}\otimes\widetilde{\Pi}\right)\sum_{\sigma_2(1)=1}W_{\sigma_2}\left(\mbb{I}\otimes\widetilde{\Pi}\right)\right)\mathcal{V}_T^{\dagger}\ket{\Phi}\\
&+\bra{\Phi}\mathcal{V}_T\left(\left(\mbb{I}\otimes\widetilde{\Pi}\right)\sum_{\sigma_1(1)=1}W_{\sigma_1}\left(\mbb{I}\otimes\widetilde{\Pi}\right)\right)\otimes\left(\left(\mbb{I}\otimes\widetilde{\Pi}\right)\sum_{\sigma_2(1)\neq 1}W_{\sigma_2}\left(\mbb{I}\otimes\widetilde{\Pi}\right)\right)\mathcal{V}_T^{\dagger}\ket{\Phi}\\
&+\bra{\Phi}\mathcal{V}_T\left(\left(\mbb{I}\otimes\widetilde{\Pi}\right)\sum_{\sigma_1(1)\neq1}W_{\sigma_1}\left(\mbb{I}\otimes\widetilde{\Pi}\right)\right)\otimes\left(\left(\mbb{I}\otimes\widetilde{\Pi}\right)\sum_{\sigma_2(1)\neq1}W_{\sigma_2}\left(\mbb{I}\otimes\widetilde{\Pi}\right)\right)\mathcal{V}_T^{\dagger}\ket{\Phi}.
\end{aligned}
\end{equation}
Denote $\ket{\widetilde{\Phi}}=\ket{\phi_2}\otimes\cdots\otimes\ket{\phi_T}$, the first term of the last equation in Eq.~\eqref{eq:induction_pure_4terms} satisfies
\begin{equation}
\begin{aligned}
&\bra{\Phi}\mathcal{V}_T\left(\left(\mbb{I}\otimes\widetilde{\Pi}\right)\sum_{\sigma_1(1)=1}W_{\sigma_1}\left(\mbb{I}\otimes\widetilde{\Pi}\right)\right)\otimes\left(\left(\mbb{I}\otimes\widetilde{\Pi}\right)\sum_{\sigma_2(1)=1}W_{\sigma_2}\left(\mbb{I}\otimes\widetilde{\Pi}\right)\right)\mathcal{V}_T^{\dagger}\ket{\Phi}\\
=&\braket{\phi_1}{\phi_1}\bra{\widetilde{\Phi}}\mathcal{V}_{T-1}\left(\widetilde{\Pi}\sum_{\widetilde{\sigma}_1\in S_{T-1}}W_{\widetilde{\sigma}_1}\widetilde{\Pi}\right)\otimes\left(\widetilde{\Pi}\sum_{\widetilde{\sigma}_2\in S_{T-1}}W_{\widetilde{\sigma}_2}\widetilde{\Pi}\right)\mathcal{V}_{T-1}^{\dagger}\ket{\widetilde{\Phi}}\\
=&\sum_{\widetilde{\sigma}_1,\widetilde{\sigma}_2\in S_{T-1}}\bra{\widetilde{\Phi}}\mathcal{V}_{T-1}\left(\widetilde{\Pi}W_{\widetilde{\sigma}_1}\widetilde{\Pi}\right)\otimes\left(\widetilde{\Pi}W_{\widetilde{\sigma}_2}\widetilde{\Pi}\right)\mathcal{V}_{T-1}^{\dagger}\ket{\widetilde{\Phi}}\\
=&(T-1)!(T-1)!\bra{\widetilde{\Phi}}\mathcal{V}_{T-1}\widetilde{\Pi}\otimes\widetilde{\Pi}\mathcal{V}_{T-1}^{\dagger}\ket{\widetilde{\Phi}}\\
=&\bra{\widetilde{\Phi}}\mathcal{V}_{T-1}\left(\sum_{\widetilde{\sigma}_1\in S_{T-1}}W_{\widetilde{\sigma}_2}\right)\otimes\left(\sum_{\widetilde{\sigma}_1\in S_{T-1}}W_{\widetilde{\sigma}_2}\right)\mathcal{V}_{T-1}^{\dagger}\ket{\widetilde{\Phi}}\geq 1,
\end{aligned}
\end{equation}
where the last inequality is given by the induction hypothesis. The second term satisfies
\begin{equation}
\begin{aligned}
&\bra{\Phi}\mathcal{V}_T\left(\left(\mbb{I}\otimes\widetilde{\Pi}\right)\sum_{\sigma_1(1)\neq1}W_{\sigma_1}\left(\mbb{I}\otimes\widetilde{\Pi}\right)\right)\otimes\left(\left(\mbb{I}\otimes\widetilde{\Pi}\right)\sum_{\sigma_2(1)= 1}W_{\sigma_2}\left(\mbb{I}\otimes\widetilde{\Pi}\right)\right)\mathcal{V}_T^{\dagger}\ket{\Phi}\\
=&\sum_{\sigma_1(1)\neq1}\sum_{\sigma_2(1)=1}\bra{\Phi}\mathcal{V}_T\left(\left(\mbb{I}\otimes\widetilde{\Pi}\right)W_{\sigma_1}\left(\mbb{I}\otimes\widetilde{\Pi}\right)\right)\otimes\left(\left(\mbb{I}\otimes\widetilde{\Pi}\right)W_{\sigma_2}\left(\mbb{I}\otimes\widetilde{\Pi}\right)\right)\mathcal{V}_T^{\dagger}\ket{\Phi}.
\end{aligned}
\end{equation}
We have
\begin{equation}\label{eq:graph_1}
\bra{\Phi}\mathcal{V}_T\left(\left(\mbb{I}\otimes\widetilde{\Pi}\right)W_{\sigma_1}\left(\mbb{I}\otimes\widetilde{\Pi}\right)\right)\otimes\left(\left(\mbb{I}\otimes\widetilde{\Pi}\right)W_{\sigma_2}\left(\mbb{I}\otimes\widetilde{\Pi}\right)\right)\mathcal{V}_T^{\dagger}\ket{\Phi}\geq0    
\end{equation}
for all $\sigma_1$ and $\sigma_2$ satisfying $\sigma_1(1)\neq1$ and $\sigma_2(1)=1$, whose proof is shown graphically in Fig.~\ref{fig:2th-term-1} using tensor network diagrams. 
\begin{figure}[htbp]
\centering
\includegraphics[width=0.98\textwidth]{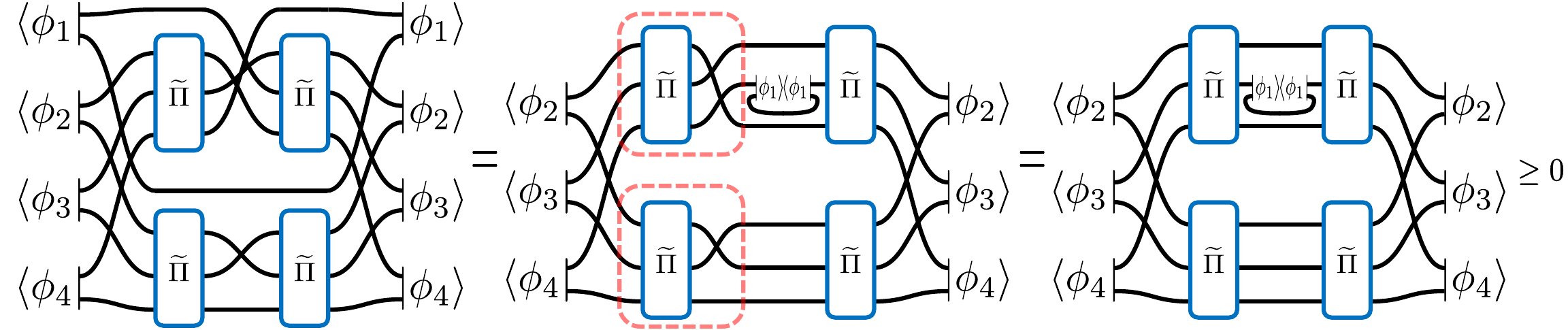}
\caption{We take $T=4$, $W_{\sigma_1}=(1423)$, and $W_{\sigma_2}=(23)$ as an example to illustrate the correctness of Eq.~\eqref{eq:graph_1}. The general situation is similar to this example. The product of two terms in the red dashed box equals $\widetilde{\Pi}$. The last tensor network can be seen as an unnormalized state doing inner product with itself, thus is non-negative.}
\label{fig:2th-term-1}
\end{figure}

Following the same logic, the third term of the last equation in Eq.~\eqref{eq:induction_pure_4terms} is also non-negative. 
The fourth term of the last equation in Eq.~\eqref{eq:induction_pure_4terms} satisfies
\begin{equation}
\begin{aligned}
&\bra{\Phi}\mathcal{V}_T\left(\left(\mbb{I}\otimes\widetilde{\Pi}\right)\sum_{\sigma_1(1)\neq1}W_{\sigma_1}\left(\mbb{I}\otimes\widetilde{\Pi}\right)\right)\otimes\left(\left(\mbb{I}\otimes\widetilde{\Pi}\right)\sum_{\sigma_2(1)\neq1}W_{\sigma_2}\left(\mbb{I}\otimes\widetilde{\Pi}\right)\right)\mathcal{V}_T^{\dagger}\ket{\Phi}\\
=&\sum_{\sigma_1(1)\neq1}\sum_{\sigma_2(1)\neq1}\bra{\Phi}\mathcal{V}_T\left(\left(\mbb{I}\otimes\widetilde{\Pi}\right)W_{\sigma_1}\left(\mbb{I}\otimes\widetilde{\Pi}\right)\right)\otimes\left(\left(\mbb{I}\otimes\widetilde{\Pi}\right)W_{\sigma_2}\left(\mbb{I}\otimes\widetilde{\Pi}\right)\right)\mathcal{V}_T^{\dagger}\ket{\Phi},
\end{aligned}
\end{equation}
where all the terms in the summation are again all non-negative
\begin{equation}\label{eq:graph_2}
\bra{\Phi}\mathcal{V}_T\left(\left(\mbb{I}\otimes\widetilde{\Pi}\right)W_{\sigma_1}\left(\mbb{I}\otimes\widetilde{\Pi}\right)\right)\otimes\left(\left(\mbb{I}\otimes\widetilde{\Pi}\right)W_{\sigma_2}\left(\mbb{I}\otimes\widetilde{\Pi}\right)\right)\mathcal{V}_T^{\dagger}\ket{\Phi}\ge 0.
\end{equation}
We provide graphical proof as shown in Fig.~\ref{fig:4th-term-1}.
\end{proof}

\begin{figure}[htbp]
\centering
\includegraphics[width=0.98\textwidth]{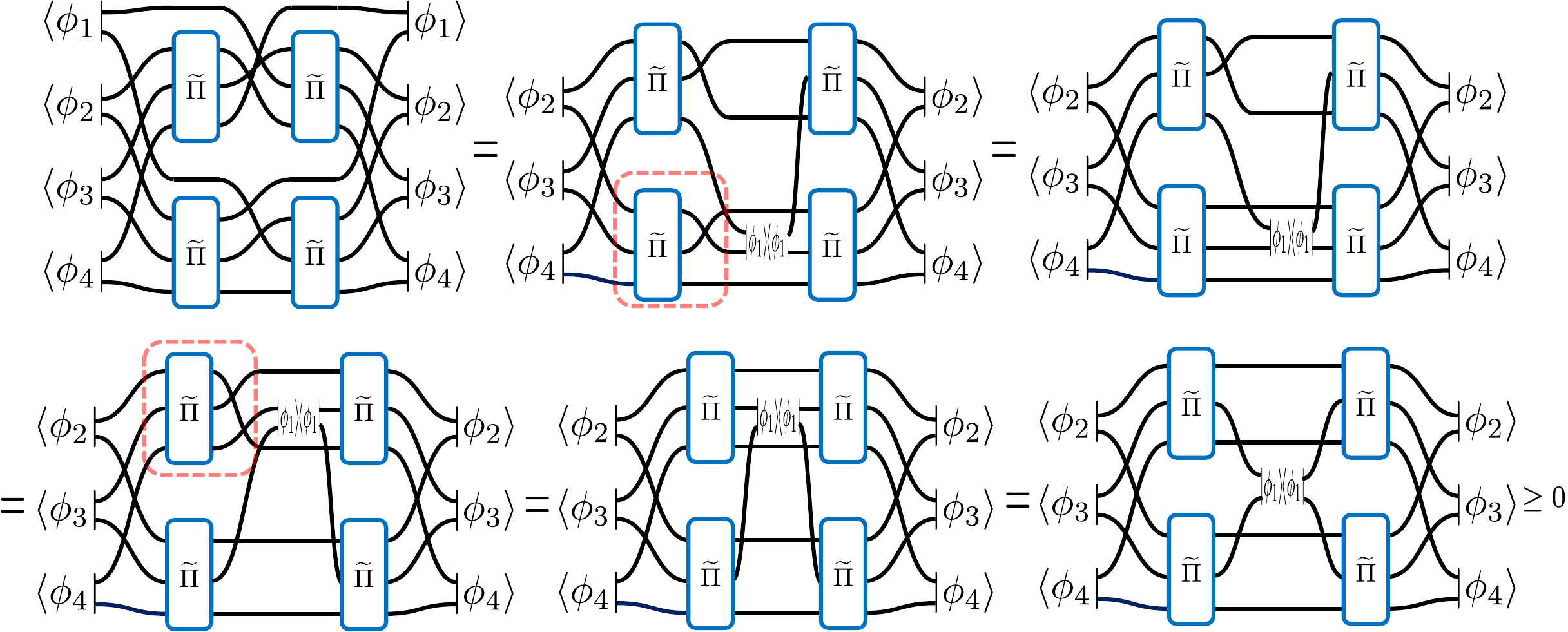}
\caption{We take $T=4$. $W_{\sigma_1}=(1423)$ and $W_{\sigma_2}=(123)$ as an example to illustrate the positivity of Eq.~\eqref{eq:graph_2}. The product of two terms in the red dashed box equals $\widetilde{\Pi}$. The general situation is similar to this example. The last tensor network can be seen as an unnormalized state doing inner product with itself, thus is non-negative.}
\label{fig:4th-term-1}
\end{figure}

\section{Generalization}\label{sec:generalization}
Till now, we mainly focus on the pure bipartite states.
Our result, especially the lower bounds, can be generalized to mixed-state and multipartite cases.
Exponential lower bounds of sample complexities also exist in these cases. 

For mixed-state entanglement detection, we have the following result:

\begin{theorem}\label{theorem:mixed-state}
Given multiple copies of a state $\rho$ that is sampled from $\pi_{d,k}^*$ with $k=\mathcal{O}(d^{3/2})$, all single-copy adaptive EDPs with completeness, $\Pr_{\rho\sim\pi_{d,k}^*}\left[\hat{C}(\rho)<0\right]=\frac{1}{4}$, and soundness, $\Pr_{\rho\sim\pi_{d,k}^*}\left[\rho\in\operatorname{ENT}\Big|\hat{C}(\rho)<0\right]=\frac{5}{6}$, require $\Omega\left((dk)^\frac{1}{4}\right)$ copies of $\rho$.
\end{theorem}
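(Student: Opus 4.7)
The plan is to follow the template of Theorem~\ref{theorem:exponential_complexity_app}, generalizing the pure-state argument to mixed states by purifying each $\rho$ and reusing Lemma~\ref{lemma:pure_product} verbatim on the larger Hilbert space. First I would reduce the EDP to a state discrimination task between $s_1 = \{\rho \sim \pi_{d,k}\}$ and $s_2 = \{\rho = \rho_A \otimes \rho_B : \rho_A, \rho_B \sim \pi_{\sqrt{d},\sqrt{k}}\}$. For the reduction to yield a state-discrimination success probability $\geq 2/3 - o(1)$, states in $s_1$ must be entangled with probability $1-o(1)$; the Aubrun--Szarek--Ye phase-transition results \cite{aubrun2012phase,aubrun2014entanglement} provide exactly this guarantee in the assumed range $k = \mathcal{O}(d^{3/2})$, while states in $s_2$ are separable by construction.

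For the decision-tree analysis, the key move is a purification identity. Writing $\rho = \Tr_B\ketbra{\psi}{\psi}$ with $\psi$ Haar-random on $\mc{H}_d \otimes \mc{H}_k$ and expanding $\mbb{I}_B = \sum_b \ketbra{b}{b}$ in the computational basis gives
\begin{equation}
\prod_{t=1}^T \bra{\phi_{e_t}}\rho\ket{\phi_{e_t}} = \sum_{b_1,\ldots,b_T \in [k]}\abs{\braket{\chi(b)}{\psi^{\otimes T}}}^2, \qquad \ket{\chi(b)} = \bigotimes_{t=1}^T \ket{\phi_{e_t},b_t},
\end{equation}
so each summand is an overlap of a product vector on $(\mc{H}_d \otimes \mc{H}_k)^{\otimes T}$ with $\ket{\psi}^{\otimes T}$. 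Taking $\mathbb{E}_\psi$ via Lemma~\ref{lemma:expectation_to_permutation} and invoking the identity-permutation lower bound $\bra{\chi}\sum_\sigma W_\sigma \ket{\chi} \geq 1$ yields $\frac{\mathbb{E}[p_\rho(l)]}{p_{\mbb{I}_d/d}(l)} \geq (1+(T-1)/(dk))^{-(T-1)}$ for $s_1$. For $s_2$, I would purify each $\rho_{A_i}$ separately with independent ancillas $B_1, B_2$ of dimension $\sqrt{k}$ and perform the same basis expansion; each summand then reduces, via two applications of Lemma~\ref{lemma:expectation_to_permutation}, to
\begin{equation}
\left(\frac{(\sqrt{dk}-1)!}{(\sqrt{dk}+T-1)!}\right)^{2}\bra{\chi(b)}\Bigl(\sum_{\sigma_1} W_{\sigma_1}\Bigr)\otimes\Bigl(\sum_{\sigma_2} W_{\sigma_2}\Bigr)\ket{\chi(b)},
\end{equation}
evaluated on $\ket{\chi(b)} = \bigotimes_t \ket{\phi_{e_t},b_1^{(t)},b_2^{(t)}}$, which is tensor-product across the $T$ copies on the bipartition $(A_1 B_1) \otimes (A_2 B_2)$ of dimension $\sqrt{dk} \times \sqrt{dk}$. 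This is exactly the setting of Lemma~\ref{lemma:pure_product} with its dimension parameter rescaled from $d$ to $dk$ — the possible entanglement of $\ket{\phi_{e_t}}$ across $A_1 A_2$ within each single copy is immaterial, since the lemma only uses the product-across-copies structure — so the lemma bounds the bilinear form below by $1$, and summing over the $k^T$ basis choices gives $\frac{\mathbb{E}[p_\rho(l)]}{p_{\mbb{I}_d/d}(l)} \geq (1+(T-1)/\sqrt{dk})^{-2(T-1)}$.

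Finally, the triangle inequality through $p_{\mbb{I}_d/d}$ and Le Cam's two-point method, applied exactly as in Theorem~\ref{theorem:exponential_complexity_app}, force $(1+(T-1)/\sqrt{dk})^{-2(T-1)} \leq 5/6$ for any successful discriminator, implying $T = \Omega((dk)^{1/4})$. The main obstacle is the bookkeeping in the separable case: one must carefully unravel $(A_1 A_2 B_1 B_2)^{\otimes T}$ into $(A_1 B_1)^{\otimes T} \otimes (A_2 B_2)^{\otimes T}$ and verify that $\ket{\chi(b)}$ is genuinely product across the $T$ copies on the bipartition $(A_1 B_1)/(A_2 B_2)$, so that the inductive proof of Lemma~\ref{lemma:pure_product} — which exploits only this tensor-product-across-copies structure — transfers without modification.
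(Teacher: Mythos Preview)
Your proposal is correct and takes a genuinely different route from the paper. The paper keeps the partial trace in operator form, writing the leaf-likelihood ratio as $d^T\bra{\Phi}\Tr_{\operatorname{even}}(\cdot)\ket{\Phi}$, and then proves two new auxiliary results (Lemmas~\ref{lemma:mixed_psi} and~\ref{lemma:mixed_psiApsiB}) stating that $\bra{\Phi}\Tr_{\operatorname{even}}\bigl(\sum_\sigma W_\sigma\bigr)\ket{\Phi}\ge k^T$ and its separable analogue; each is established by a fresh induction with tensor-network positivity diagrams paralleling those of Lemma~\ref{lemma:pure_product}. You instead \emph{expand} each partial trace over a computational basis of the purifying ancilla, turning the mixed-state moment into a sum of $k^T$ pure-state moments on $\mc{H}_{dk}^{\otimes T}$; for $s_1$ each summand is $\ge 1$ by the identity-permutation bound, and for $s_2$ each summand is precisely an instance of Lemma~\ref{lemma:pure_product} with dimension parameter $dk$ (since, as you note, that lemma only uses the product-across-copies structure and is dimension-agnostic). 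This is strictly more economical---it recovers the same lower bounds with no new induction and in fact gives one-line proofs of the paper's Lemmas~\ref{lemma:mixed_psi} and~\ref{lemma:mixed_psiApsiB} as corollaries. The paper's $\Tr_{\operatorname{even}}$ formalism has the minor advantage of packaging the argument in a form that might generalize more transparently, but for the result as stated your purification trick is the cleaner argument.
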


For multipartite cases, we first generalize the definition of $\pi_{d,k}^*$.

\begin{definition}
A state $\rho\in D(\mathcal{H}_d)$ is said to be sampled according to the distribution of $\pi_{d,1}^{(K)*}$ if and only if with probability $0.5$, it is a Haar random pure state $\psi\in\mathcal{H}_{d}$; with probability $0.5$, it is the tensor product of $K$ Haar random pure state $\psi=\bigotimes_{i=1}^K\psi_i$ with $\psi_i\in D(\mathcal{H}_{d^{1/K}})$.
\end{definition}

A $K$-partite state $\rho$ is said to have $K$-partite entanglement iff it cannot be written as
\begin{equation}
\rho = \sum_ip_i\bigotimes_{N=1}^K\rho_{N}^i,
\end{equation}
where $p_i$ is probability satisfying $\sum_ip_i=1$.
Then, the definition of EDP can also be naturally generalized from bipartite to multipartite case. 
We obtain the following result:

\begin{theorem}\label{theorem:multipartite}
Given multiple copies of a state $\rho$ that is sampled from $\pi_{d,k}^{(K)*}$, all single-copy adaptive multipartite EDPs with completeness, $\Pr_{\rho\sim\pi_{d,k}^{(K)*}}\left[\hat{C}(\rho)<0\right]=\frac{1}{4}$, and soundness, $\Pr_{\rho\sim\pi_{d,k}^{(K)*}}\left[\rho\in\operatorname{ENT}\Big|\hat{C}(\rho)<0\right]=\frac{5}{6}$,  require $\Theta\left(d^{\frac{1}{2K}}\right)$ copies of $\rho$.
\end{theorem}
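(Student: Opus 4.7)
The plan is to parallel the proof of Theorem~\ref{theorem:exponential_complexity_app}, replacing the bipartite decomposition by a $K$-fold one at each step. First I would reduce the EDP problem to the state discrimination task between (i) $s_1^{(K)}$ of Haar-random pure states $\psi\in\mc{H}_d$ and (ii) $s_2^{(K)}$ of fully product pure states $\bigotimes_{k=1}^{K}\psi_k$ with $\psi_k\in\mc{H}_{d^{1/K}}$ independent Haar-random. Samples from $s_1^{(K)}$ are $K$-partite entangled with probability asymptotically $1$ and samples from $s_2^{(K)}$ are fully separable, so the completeness and soundness assumptions force any successful EDP to decide this task with probability at least $2/3$. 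Capturing every single-copy adaptive protocol by a decision tree as in Fig.~\ref{fig:tree_app}, Le~Cam's two-point method reduces the lower bound to upper bounding $\operatorname{TV}\!\bigl(\mathbb{E}_{\psi}\,p_{\psi},\,\mathbb{E}_{\psi_1,\ldots,\psi_K}\,p_{\otimes_k\psi_k}\bigr)$.

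Via the triangle inequality through $p_{\mbb{I}_d/d}$, this splits into two pieces, each computable through Lemma~\ref{lemma:expectation_to_permutation}. The first piece is identical to Eq.~\eqref{eq:TV_Epsi_I}. For the second I would unravel the $K$ copies with $\mc{V}_T^{(K)}$ and apply Lemma~\ref{lemma:expectation_to_permutation} to each of the $K$ factors, reducing the question to the $K$-partite analogue of Lemma~\ref{lemma:pure_product}: for every product pure state $\ket{\Phi}\in\mc{H}_d^{\otimes T}$,
\begin{equation}
\bra{\Phi}\,\mc{V}_T^{(K)}\Bigl(\bigotimes_{k=1}^{K}\sum_{\sigma_k\in S_T}W_{\sigma_k}\Bigr)(\mc{V}_T^{(K)})^{\dagger}\ket{\Phi}\;\ge\;1,
\end{equation}
where each $W_{\sigma_k}$ acts on $\mc{H}_{d^{1/K}}^{\otimes T}$. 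This yields $\operatorname{TV}(\mathbb{E}\,p_{\otimes_k\psi_k},p_{\mbb{I}_d/d})\le 1-(1+(T-1)/d^{1/K})^{-K(T-1)}$, and combining the two pieces gives the constraint $2/3\le\tfrac{3}{2}-(1+(T-1)/d^{1/K})^{-K(T-1)}$, which forces $T=\Omega(d^{1/(2K)})$.

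For the matching upper bound I would extend the bipartite randomized-measurement EDP: perform independent randomized measurements on each of the $K$ subsystems to estimate $\Tr(\rho_k^2)$, and declare entanglement if any estimate is appreciably below $1$. States from the product component of $\pi_{d,1}^{(K)*}$ satisfy $\Tr(\rho_k^2)=1$ for every $k$, whereas a Haar-random pure state on $\mc{H}_d$ has $\mathbb{E}[\Tr(\rho_k^2)]=\mc{O}(d^{-1/K})$ for every $k$ by Page's formula; a variance computation for the randomized-measurement estimator on a subsystem of dimension $d^{1/K}$ then shows that $\mc{O}(d^{1/(2K)})$ single-copy measurements suffice for soundness. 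The main obstacle is the $K$-partite analogue of Lemma~\ref{lemma:pure_product}. I would prove it by induction on $T$: conjugate each $\sum_{\sigma_k}W_{\sigma_k}$ by $\mbb{I}\otimes\widetilde{\Pi}_k$ (with $\widetilde{\Pi}_k$ the symmetric projector on $\mc{H}_{d^{1/K}}^{\otimes(T-1)}$), and partition the resulting expression into $2^K$ pieces according to the tuple $(\mathbf{1}[\sigma_k(1)=1])_{k=1}^{K}$. The piece where every $\sigma_k$ fixes $1$ reduces to the induction hypothesis at depth $T-1$ after extracting $\braket{\phi_1}{\phi_1}=1$; the remaining $2^K-1$ pieces must each be shown non-negative by extending the tensor-network rewriting of Figs.~\ref{fig:2th-term-1} and \ref{fig:4th-term-1} leg-by-leg across the $K$ factors and verifying that every pattern of fixed/non-fixed leading elements reassembles into a squared norm.
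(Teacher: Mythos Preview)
Your proposal is correct and follows essentially the same route as the paper: the same reduction to state discrimination, the same decision-tree/Le~Cam bound via the triangle inequality through $p_{\mbb{I}_d/d}$, and the same $K$-partite analogue of Lemma~\ref{lemma:pure_product} (which the paper states without details as ``similar to the proof of Lemma~\ref{lemma:pure_product}''; your $2^K$-piece induction is exactly the natural generalization). The only minor deviation is in the upper bound: the paper estimates the purity of a \emph{single} subsystem of dimension $d^{1/K}$ (which already distinguishes the two ensembles), whereas you propose to check all $K$ subsystems---both yield the same $\mathcal{O}(d^{1/(2K)})$ scaling.
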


\subsection{Proof of Theorem~\ref{theorem:mixed-state}}

Here we consider the case of $k=\mathcal{O}(d^{3/2})$ as in this range, all state sampled according to $\pi_{d,k}$ is entangled with probability asymptotically one \cite{aubrun2012phase}.
Thus, the analysis of EDP is also equivalent to the analysis of a specific state discrimination task.
Consider the state discrimination task $\{\pi_{d,k}\}$ v.s. $\{\pi_{\sqrt{d},\sqrt{k}}\otimes\pi_{\sqrt{d},\sqrt{k}}\}$. More precisely, consider the following two state sets with distributions defined as:
\begin{enumerate}
\item $\{\rho=\Tr_k(\ketbra{\psi}{\psi})\}$, where $\ket{\psi}\in\mc{H}_{d\times k}$ is a Haar random pure state.
\item $\left\{\rho=\Tr_{\sqrt{k}}(\ketbra{\psi_A}{\psi_A})\otimes\Tr_{\sqrt{k}}(\ketbra{\psi_B}{\psi_B})\right\}$, where $\ket{\psi_A},\ket{\psi_B}\in\mc{H}_{\sqrt{d}\times\sqrt{k}}$ are Haar random pure states.
\end{enumerate}
Suppose a referee tosses a fair coin to select a set, and then chooses a state $\rho$ from the selected set according to the distribution shown above. Then the referee gives $T$ copies of $\rho$ to us. Our task is to decide which set $\rho$ is from, with $T$ copies of $\rho$.

Given an EDP satisfying the requirements in Theorem~\ref{theorem:mixed-state}, we can solve this state discrimination task with a success probability of $2/3$. 
% Thus it suffices to prove that the sample complexity for solving this state discrimination task, with success probability of $2/3$, has a lower bound $\Omega\left((dk)^{1/4}\right)$ for all single-copy adaptive strategies.
Thus, it suffices to prove:
\begin{proposition}
Using single-copy adaptive measurement, we need at least $T=\Omega\left((dk)^{1/4}\right)$ copies of $\rho$ to find out which set $\rho$ is taken from, with a success probability of at least $2/3$.
\end{proposition}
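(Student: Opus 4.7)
The plan is to run the same decision-tree / Le Cam argument used in the proof of Theorem~\ref{theorem:exponential_complexity_app}, but with the Haar integrals now applied to the purifying space $\mc{H}_{dk}$ (and $\mc{H}_{\sqrt{dk}}\otimes\mc{H}_{\sqrt{dk}}$ in the product case). First I would fix an arbitrary single-copy adaptive protocol with decision tree of depth $T$, write the transition probabilities on each edge in terms of rank-$1$ POVMs $w_{e_t}^{n_{t-1}}\ketbra{\phi_{e_t}^{n_{t-1}}}{\phi_{e_t}^{n_{t-1}}}$ exactly as in Appendix~\ref{sec:bipartite_lower_bound}, and compare each ensemble's leaf distribution to the distribution $p_{\mbb{I}_d/d}$ induced by the maximally mixed state, bounding $\operatorname{TV}(\underset{s_i}{\mathbb{E}}p_\rho,\, p_{\mbb{I}_d/d})$ via pointwise lower bounds on the ratio $\underset{s_i}{\mathbb{E}}p_\rho(l)/p_{\mbb{I}_d/d}(l)$.

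The first calculation is to compute $\underset{s_i}{\mathbb{E}}\rho^{\otimes T}$ for each ensemble. Using Lemma~\ref{lemma:expectation_to_permutation} on $\mc{H}_{dk}$ together with the unraveling identity $W_\sigma^{(dk)}\cong W_\sigma^{(d)}\otimes W_\sigma^{(k)}$, tracing out the ancilla gives
\begin{equation}
\underset{\psi\sim\operatorname{Haar}(\mc{H}_{dk})}{\mathbb{E}}\rho^{\otimes T}=\frac{1}{dk(dk+1)\cdots(dk+T-1)}\sum_{\sigma\in S_T}k^{c(\sigma)}W_\sigma^{(d)},
\end{equation}
where $c(\sigma)$ is the number of cycles of $\sigma$; an analogous computation for $s_2$ produces a two-factor version on $\mc{H}_{\sqrt{d}}^{\otimes T}\otimes\mc{H}_{\sqrt{d}}^{\otimes T}$ with $\sqrt{dk}$ and $\sqrt{k}$ replacing $dk$ and $k$. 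The crux of the argument is the weighted analog of the product-state bound used in the pure case: for any product state $\ket{\Phi}=\ket{\phi_1}\otimes\cdots\otimes\ket{\phi_T}\in\mc{H}_d^{\otimes T}$, I want to show
\begin{equation}
\bra{\Phi}\sum_{\sigma\in S_T}k^{c(\sigma)}W_\sigma^{(d)}\ket{\Phi}\;\ge\;k^T.
\end{equation}
I would prove this by recognising $\sum_\sigma k^{c(\sigma)}W_\sigma^{(d)}=\operatorname{Tr}_{\mc{H}_k^{\otimes T}}\!\bigl[\sum_\sigma W_\sigma^{(dk)}\bigr]$, picking any product orthonormal basis $\{\ket{e_j}\}_{j=1}^{k^T}$ of $\mc{H}_k^{\otimes T}$, and applying the fact $\bra{v}\sum_\sigma W_\sigma\ket{v}\ge 1$ for product states $\ket{v}$ (Ref.~\cite{Chen2021exponential}) to each vector $\ket{v_j}=\ket{\Phi}\otimes\ket{e_j}$, which is product in $(\mc{H}_d\otimes\mc{H}_k)^{\otimes T}$ after unraveling; summing over the $k^T$ terms yields the factor $k^T$. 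For the $s_2$ counterpart, the same device applied with auxiliary systems of dimension $\sqrt{k}$ on each side, combined with Lemma~\ref{lemma:pure_product} on $\mc{H}_{\sqrt{d}}\otimes\mc{H}_{\sqrt{k}}$, will give the lower bound $k^T=(\sqrt{k})^{2T}$.

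Plugging these inequalities into the ratios produces the pointwise lower bounds $\bigl(1+\tfrac{T-1}{dk}\bigr)^{-(T-1)}$ for $s_1$ and $\bigl(1+\tfrac{T-1}{\sqrt{dk}}\bigr)^{-2(T-1)}$ for $s_2$; the triangle inequality and Le Cam's two-point method then reproduce the algebraic bottleneck of Theorem~\ref{theorem:exponential_complexity_app} with $d$ replaced by $dk$, forcing $T=\Omega((dk)^{1/4})$. I expect the main obstacle to be the weighted product-state bound of the previous paragraph: positive semidefiniteness of the sum is immediate from its partial-trace representation, but isolating the sharp constant $k^T$ rather than merely $0$ is what converts positivity into a quantitative TV bound, and carrying the argument through for $s_2$ requires stacking two independent partial traces with the unraveling $\mc{V}_T$, likely reusing the inductive tensor-network argument of Lemma~\ref{lemma:pure_product} with $\mc{H}_{\sqrt{d}}$ enlarged to $\mc{H}_{\sqrt{d}}\otimes\mc{H}_{\sqrt{k}}$.
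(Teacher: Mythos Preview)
Your proposal is correct and follows the same overall Le Cam/decision-tree skeleton as the paper, but the heart of the argument---the weighted product-state inequality $\bra{\Phi}\sum_{\sigma}k^{c(\sigma)}W_\sigma^{(d)}\ket{\Phi}\ge k^T$ and its two-factor analogue---is handled differently. The paper proves these as new lemmas (Lemmas~\ref{lemma:mixed_psi} and~\ref{lemma:mixed_psiApsiB}) by rerunning the induction on $T$ from the pure case, splitting the permutation sum according to whether $\sigma(1)=1$ and drawing fresh tensor-network diagrams to certify non-negativity of the cross terms in the presence of the ancilla trace. Your route is to undo the partial trace by summing over a product basis of $\mc{H}_k^{\otimes T}$, observe that each summand is exactly the pure-state quantity $\bra{v_j}\sum_\sigma W_\sigma^{(dk)}\ket{v_j}\ge 1$ (and, for $s_2$, exactly the quantity in Lemma~\ref{lemma:pure_product} with $\sqrt{d}\to\sqrt{dk}$), and collect the $k^T$ terms. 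This is cleaner: it reduces the mixed-state lemmas to the already-established pure-state ones in one line, with no new induction or diagrams. The paper's approach, on the other hand, is more self-contained and makes the positivity mechanism visible at each step, which may generalise more readily to settings where the ancilla is not literally a tensor factor. Either way the algebraic endgame is identical and yields $T=\Omega((dk)^{1/4})$.
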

\begin{proof}
Denote $p_{\psi}$ and $p_{\psi_A,\psi_B}$ be probability distributions on the leaves induced by $\rho=\Tr_2(\ketbra{\psi}{\psi})$ and $\rho=\Tr_2(\ketbra{\psi_A}{\psi_A})\otimes\Tr_2(\ketbra{\psi_B}{\psi_B})$. Consider the following quantity:
\begin{equation}
\begin{aligned}
\frac{\underset{\psi\sim \operatorname{Haar}}{\mathbb{E}}p_{\psi}(l)}{p_{\mathbb{I}_d/d}(l)}=&\frac{1}{\prod_{t=1}^Tw_{e_t}^{n_{t-1}}\frac{1}{d}}\underset{\psi\sim \operatorname{Haar}}{\mathbb{E}}\prod_{t=1}^{T}w_{e_t}^{n_{t-1}}\bra{\phi_{e_t}^{n_{t-1}}}\Tr_k(\ketbra{\psi}{\psi})\ket{\phi_{e_t}^{n_{t-1}}}\\
=&d^T\underset{\psi\sim \operatorname{Haar}}{\mathbb{E}}\prod_{t=1}^{T}\bra{\phi_{e_t}^{n_{t-1}}}\Tr_k(\ketbra{\psi}{\psi})\ket{\phi_{e_t}^{n_{t-1}}}\\
=&d^T\bra{\Phi}\left(\underset{\psi\sim \operatorname{Haar}}{\mathbb{E}}\Tr_k(\ketbra{\psi}{\psi})^{\otimes T}\right)\ket{\Phi}\\
=&d^T\bra{\Phi}\left(\Tr_{\operatorname{even}}\left(\underset{\psi\sim \operatorname{Haar}}{\mathbb{E}}\ketbra{\psi}{\psi}^{\otimes T}\right)\right)\ket{\Phi}\\
=&\frac{d^T}{\left(dk+T-1\right)\cdots\left(dk+1\right)dk}\bra{\Phi}\Tr_{\operatorname{even}}\left(\sum_{\sigma\in S_T}W_{\sigma}\right)\ket{\Phi},
\end{aligned}
\end{equation}
where $W_{\sigma}$ are permutation operators acting on $\mc{H}_{d\times k}^{\otimes T}$, and $\Tr_{\operatorname{even}}$ denotes tracing out systems $2,4,6,\cdots,2T$. By Lemma~\ref{lemma:mixed_psi}, we have
\begin{equation}
\frac{\underset{\psi\sim \operatorname{Haar}}{\mathbb{E}}p_{\psi}(l)}{p_{\mathbb{I}_d/d}(l)}\ge\frac{(dk)^T}{\left(dk+T-1\right)\cdots\left(dk+1\right)dk}=\frac{1}{\left(1+\frac{T-1}{dk}\right)\cdots\left(1+\frac{1}{dk}\right)1}\ge\left(1+\frac{T-1}{dk}\right)^{-(T-1)}.
\end{equation}
On the other hand, we have

\begin{equation}
\begin{aligned}
\frac{\underset{\psi_A,\psi_B}{\mathbb{E}}p_{\psi_A,\psi_B}(l)}{p_{\mathbb{I}_d/d}(l)}&=\frac{1}{\prod_{t=1}^T\left(w_{e_t}^{n_{t-1}}\frac{1}{d}\right)}\underset{\psi_A,\psi_B}{\mathbb{E}}\prod_{t=1}^{T}w_{e_t}^{n_{t-1}}\bra{\phi_{e_t}^{n_{t-1}}}\Tr_{\sqrt{k}}(\ketbra{\psi_A}{\psi_A})\otimes\Tr_{\sqrt{k}}(\ketbra{\psi_B}{\psi_B})\ket{\phi_{e_t}^{n_{t-1}}}\\
&=d^T\underset{\psi_A,\psi_B}{\mathbb{E}}\prod_{i=1}^{T}\bra{\phi_{e_t}^{n_{t-1}}}\Tr_{\sqrt{k}}(\ketbra{\psi_A}{\psi_A})\otimes\Tr_{\sqrt{k}}(\ketbra{\psi_B}{\psi_B})\ket{\phi_{e_t}^{n_{t-1}}}\\
&=d^T\bra{\Phi}\left(\underset{\psi_A,\psi_B}{\mathbb{E}}\left[\Tr_{\sqrt{k}}(\ketbra{\psi_A}{\psi_A})\otimes\Tr_{\sqrt{k}}(\ketbra{\psi_B}{\psi_B})\right]^{\otimes T}\right)\ket{\Phi}\\
&=d^T\bra{\Phi}\mc{V}_T\left(\underset{\psi_A}{\mathbb{E}}\Tr_{\sqrt{k}}(\ketbra{\psi_A}{\psi_A})^{\otimes T}\otimes\underset{\psi_B}{\mathbb{E}}\Tr_{\sqrt{k}}(\ketbra{\psi_B}{\psi_B})^{\otimes T}\right)\mathcal{V}_T^{\dagger}\ket{\Phi}\\
&=d^T\bra{\Phi}\mc{V}_T\left(\Tr_{\operatorname{even}}\left(\underset{\psi_A}{\mathbb{E}}\ketbra{\psi_A}{\psi_A}^{\otimes T}\right)\otimes\Tr_{\operatorname{even}}\left(\underset{\psi_B}{\mathbb{E}}\ketbra{\psi_B}{\psi_B}^{\otimes T}\right)\right)\mathcal{V}_T^{\dagger}\ket{\Phi}\\
&=d^T\left(\frac{1}{\left(\sqrt{dk}+T-1\right)\cdots\left(\sqrt{dk}+1\right)\sqrt{dk}}\right)^2\bra{\Phi}\mathcal{V}_T\Tr_{\operatorname{even}}\left(\sum_{\sigma_1\in S_T}W_{\sigma_1}\right)\otimes\Tr_{\operatorname{even}}\left(\sum_{\sigma_2\in S_T}W_{\sigma_2}\right)\mathcal{V}_T^{\dagger}\ket{\Phi},
\end{aligned}
\end{equation}
where $W_{\sigma_1},W_{\sigma_2}$ are permutation operators acting on $\left(\mc{H}_{\sqrt{dk}}\right)^{\otimes T}$. By Lemma~\ref{lemma:mixed_psiApsiB}, we have
\begin{equation}
\begin{aligned}
\frac{\underset{\psi_A,\psi_B}{\mathbb{E}}p_{\psi_A,\psi_B}(l)}{p_{\mathbb{I}_d/d}(l)}&\ge\left(\frac{\sqrt{dk}^T}{\left(\sqrt{dk}+T-1\right)\cdots\left(\sqrt{dk}+1\right)\sqrt{dk}}\right)^2\\
&=\left(\frac{1}{\left(1+\frac{T-1}{\sqrt{dk}}\right)\cdots\left(1+\frac{1}{\sqrt{dk}}\right)1}\right)^2\ge\left(1+\frac{T-1}{\sqrt{dk}}\right)^{-2(T-1)}.
\end{aligned}
\end{equation}
Putting everything together, we have
\begin{equation}
\begin{aligned}
\operatorname{TV}\left(\underset{\psi\sim \operatorname{Haar}}{\mathbb{E}}p_{\psi},\underset{\psi_A,\psi_B}{\mathbb{E}}p_{\psi_A,\psi_B}\right)\leq&\operatorname{TV}\left(\underset{\psi\sim \operatorname{Haar}}{\mathbb{E}}p_{\psi},p_{\mathbb{I}_d/d}\right)+\operatorname{TV}\left(\underset{\psi_A,\psi_B}{\mathbb{E}}p_{\psi_A,\psi_B},p_{\mathbb{I}_d/d}\right)\\
\leq&2-\left(1+\frac{T-1}{dk}\right)^{-(T-1)}-\left(1+\frac{T-1}{\sqrt{dk}}\right)^{-2(T-1)}\\
\le&2-2\left(1+\frac{T-1}{\sqrt{dk}}\right)^{-2(T-1)}
\end{aligned}
\end{equation}
To achieve a success probability of at least $2/3$, it must hold that
\begin{equation}
\frac{2}{3}\le\frac{1}{2}+\frac{1}{2}\operatorname{TV}\left(\underset{\psi\sim \operatorname{Haar}}{\mathbb{E}}p_{\psi},\underset{\psi_A,\psi_B}{\mathbb{E}}p_{\psi_A,\psi_B}\right)\le\frac{3}{2}-\left(1+\frac{T-1}{\sqrt{dk}}\right)^{-2(T-1)},
\end{equation}
which implies $\left(1+\frac{T-1}{\sqrt{dk}}\right)^{-2(T-1)}\le\frac{5}{6}$, thus $T\ge\Omega\left((dk)^{1/4}\right)$.
\end{proof}

\begin{lemma}\label{lemma:mixed_psi}
For any product state $\ket{\Phi}=\ket{\phi_1}\otimes\cdots\otimes\ket{\phi_T}\in\mathcal{H}_d^{\otimes T}$ we have 
\begin{equation}
\bra{\Phi}\Tr_{\operatorname{even}}\left(\sum_{\sigma\in S_T}W_{\sigma}\right)\ket{\Phi}\geq k^T.
\end{equation}
\end{lemma}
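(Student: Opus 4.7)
The plan is to convert the partial trace into a weighted sum of permutation operators on $\mc{H}_d^{\otimes T}$, reorganize the resulting sum combinatorially by grouping over $[k]^T$, and finish with a permanent-of-Gram-matrix inequality.

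First, I would use the unravelling $\mc{V}_T$ to identify $W_\sigma$ acting on $\mc{H}_{dk}^{\otimes T}\simeq\mc{H}_d^{\otimes T}\otimes\mc{H}_k^{\otimes T}$ with $W_\sigma^{(d)}\otimes W_\sigma^{(k)}$, and then trace out the $\mc{H}_k$ factors to obtain $\Tr_{\operatorname{even}}(W_\sigma)=\Tr(W_\sigma^{(k)})\,W_\sigma^{(d)}=k^{c(\sigma)}W_\sigma^{(d)}$, where $c(\sigma)$ is the number of cycles of $\sigma$. Hence the target equals $\sum_{\sigma\in S_T}k^{c(\sigma)}\bra{\Phi}W_\sigma^{(d)}\ket{\Phi}$.

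Second, I would invoke the combinatorial identity $k^{c(\sigma)}=\#\{\vec{b}\in[k]^T:b_{\sigma(t)}=b_t\;\forall t\}$ and swap the order of summation to rewrite the quantity as $\sum_{\vec{b}\in[k]^T}\sum_{\sigma\in\mathrm{Stab}(\vec{b})}\bra{\Phi}W_\sigma^{(d)}\ket{\Phi}$. For each $\vec{b}$ with fibres $S_1,\ldots,S_m\subseteq[T]$, the stabilizer factors as $\prod_j S(S_j)$ and the inner sum factorizes over blocks; on each block $B$ the standard Gram--permanent identity $\sum_{\tau\in S_{|B|}}\bra{\Phi_B}W_\tau\ket{\Phi_B}=\mathrm{per}(G_B)$ identifies it with the permanent of the Gram matrix $G_B=(\braket{\phi_s|\phi_t})_{s,t\in B}$. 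I would then apply Marcus's permanental inequality $\mathrm{per}(G)\ge\prod_i g_{ii}$ for positive semi-definite Hermitian $G$; with unit vectors $g_{tt}=\braket{\phi_t|\phi_t}=1$, this gives $\mathrm{per}(G_B)\ge 1$ on every fibre, so every term in the double sum is $\ge 1$ and the total is bounded below by $\abs{[k]^T}=k^T$, as desired.

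The main obstacle is the lower bound $\mathrm{per}(G_B)\ge 1$. Marcus's 1963 theorem is classical but not among the tools already cited in this paper, so I would prefer a short self-contained argument using $\mathrm{per}(G_B)=|B|!\,\|\Pi_{\mathrm{sym}}\ket{\Phi_B}\|^2$ combined with the direct estimate $\|\Pi_{\mathrm{sym}}\ket{\Phi_B}\|^2\ge 1/|B|!$ for any product of unit vectors, which can be established by induction on $|B|$ or by diagonalizing $G_B$ and expanding. A fallback route, mirroring the proof of Lemma~\ref{lemma:pure_product}, is to induct on $T$ by splitting $S_T$ on whether $\sigma(1)=1$: the restricted part contributes $k$ times the $T-1$ bound and is therefore already $\ge k^T$ by the inductive hypothesis, while the remaining $\sigma(1)\ne 1$ contribution should be shown non-negative via a tensor-network positivity argument analogous to the one used in the pure-state case.
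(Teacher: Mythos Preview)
Your primary approach is correct and takes a genuinely different route from the paper. You reduce the partial trace via $\Tr_{\operatorname{even}}(W_\sigma)=k^{c(\sigma)}W_\sigma^{(d)}$, rewrite $k^{c(\sigma)}=\#\{\vec b\in[k]^T:\sigma\cdot\vec b=\vec b\}$, swap the order of summation to regroup over colorings $\vec b$, factorize the stabilizer sum over the level sets of $\vec b$, and then apply Marcus's permanental inequality $\mathrm{per}(G)\ge\prod_i g_{ii}$ to each block's Gram matrix of unit vectors. Every one of the $k^T$ summands is therefore $\ge 1$, giving the bound directly.

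The paper instead does exactly your ``fallback route'': induction on $T$, splitting $S_T$ according to whether $\sigma(1)=1$. The $\sigma(1)=1$ piece contributes $k$ times the $(T-1)$-expression and is $\ge k^T$ by the induction hypothesis, while each summand with $\sigma(1)\ne 1$ is shown to be non-negative by a tensor-network argument (sandwiching with $\mbb{I}\otimes\widetilde\Pi$ and unfolding the diagram exhibits the term as the norm squared of an unnormalized vector). Your combinatorial argument is cleaner and delivers the result in one stroke without case-splitting, at the cost of importing Marcus's 1963 theorem; the paper's inductive proof is fully self-contained and simply reuses the diagrammatic template already established for Lemma~\ref{lemma:pure_product}, which is natural given that Lemma~\ref{lemma:mixed_psiApsiB} is proved by the same pattern immediately afterward.
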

\begin{proof}
We prove by induction on $T$. When $T=1$, we have
\begin{equation}
\bra{\phi_1}\Tr_{k}\left(\mathbb{I}_{d}\otimes\mathbb{I}_{k}\right)\ket{\phi_1}=k\braket{\phi_1}{\phi_1}=k.
\end{equation}
Now suppose the statement holds for $T-1$. Denote $\Pi=\frac{1}{T!}\sum_{\sigma\in S_T}W_{\sigma}$ be the projector to the symmetric subspace in of $\mathcal{H}_{d\times k}^{\otimes T}$, $\widetilde{\Pi}=\frac{1}{(T-1)!}\sum_{\sigma\in S_{T-1}}W_{\sigma}$ be the projector to the symmetric subspace in of $\mathcal{H}_{d\times k}^{\otimes (T-1)}$. We have
\begin{equation}\label{eq:induction_mixed_2terms}
\begin{aligned}
&\bra{\Phi}\Tr_{\operatorname{even}}\left(\sum_{\sigma\in S_T}W_{\sigma}\right)\ket{\Phi}=\bra{\Phi}\Tr_{\operatorname{even}}\left(\left(\mbb{I}\otimes\widetilde{\Pi}\right)\sum_{\sigma\in S_T}W_{\sigma}\left(\mbb{I}\otimes\widetilde{\Pi}\right)\right)\ket{\Phi}\\
=&\bra{\Phi}\Tr_{\operatorname{even}}\left(\left(\mbb{I}\otimes\widetilde{\Pi}\right)\sum_{\sigma(1)=1}W_{\sigma}\left(\mbb{I}\otimes\widetilde{\Pi}\right)\right)\ket{\Phi}+\bra{\Phi}\Tr_{\operatorname{even}}\left(\left(\mbb{I}\otimes\widetilde{\Pi}\right)\sum_{\sigma(1)\neq1}W_{\sigma}\left(\mbb{I}\otimes\widetilde{\Pi}\right)\right)\ket{\Phi}.
\end{aligned}
\end{equation}
Denote $\ket{\widetilde{\Phi}}=\ket{\phi_2}\otimes\cdots\otimes\ket{\phi_T}$. The first term of the last equation in Eq.~\eqref{eq:induction_mixed_2terms} satisfies
\begin{equation}
\begin{aligned}
&\bra{\Phi}\Tr_{\operatorname{even}}\left(\left(\mbb{I}\otimes\widetilde{\Pi}\right)\sum_{\sigma(1)=1}W_{\sigma}\left(\mbb{I}\otimes\widetilde{\Pi}\right)\right)\ket{\Phi}\\
=&k\braket{\phi_1}{\phi_1}\bra{\widetilde{\Phi}}\Tr_{\operatorname{even}}\left(\widetilde{\Pi}\sum_{\widetilde{\sigma}\in S_{T-1}}W_{\widetilde{\sigma}}\widetilde{\Pi}\right)\ket{\widetilde{\Phi}}\\
=&k\sum_{\widetilde{\sigma}\in S_{T-1}}\bra{\widetilde{\Phi}}\Tr_{\operatorname{even}}\left(\widetilde{\Pi}W_{\widetilde{\sigma}}\widetilde{\Pi}\right)\ket{\widetilde{\Phi}}\\
=&k(T-1)!\bra{\widetilde{\Phi}}\Tr_{\operatorname{even}}\left(\widetilde{\Pi}\right)\ket{\widetilde{\Phi}}\\
=&k\bra{\widetilde{\Phi}}\Tr_{\operatorname{even}}\left(\sum_{\widetilde{\sigma}\in S_{T-1}}W_{\widetilde{\sigma}}\right)\ket{\widetilde{\Phi}}\ge k^T,
\end{aligned}
\end{equation}
where the last inequality is given by the induction hypothesis. The second term of the last equation in Eq.~\eqref{eq:induction_mixed_2terms} satisfies
\begin{equation}
\begin{aligned}
\bra{\Phi}\Tr_{\operatorname{even}}\left(\left(\mbb{I}\otimes\widetilde{\Pi}\right)\sum_{\sigma(1)\neq1}W_{\sigma}\left(\mbb{I}\otimes\widetilde{\Pi}\right)\right)\ket{\Phi}=\sum_{\sigma(1)\neq1}\bra{\Phi}\Tr_{\operatorname{even}}\left(\left(\mbb{I}\otimes\widetilde{\Pi}\right)W_{\sigma}\left(\mbb{I}\otimes\widetilde{\Pi}\right)\right)\ket{\Phi}.
\end{aligned}
\end{equation}
We have
\begin{equation}\label{eq:graph_3}
\bra{\Phi}\Tr_{\operatorname{even}}\left(\left(\mbb{I}\otimes\widetilde{\Pi}\right)W_{\sigma}\left(\mbb{I}\otimes\widetilde{\Pi}\right)\right)\ket{\Phi}\geq0
\end{equation}
for all $W_{\sigma}$ satisfying $\sigma(1)\neq1$, which has a graphical proof as shown in Fig.~\ref{fig:graph_3}.

\end{proof}

\begin{figure}[t]
\centering
\includegraphics[width=0.95\textwidth]{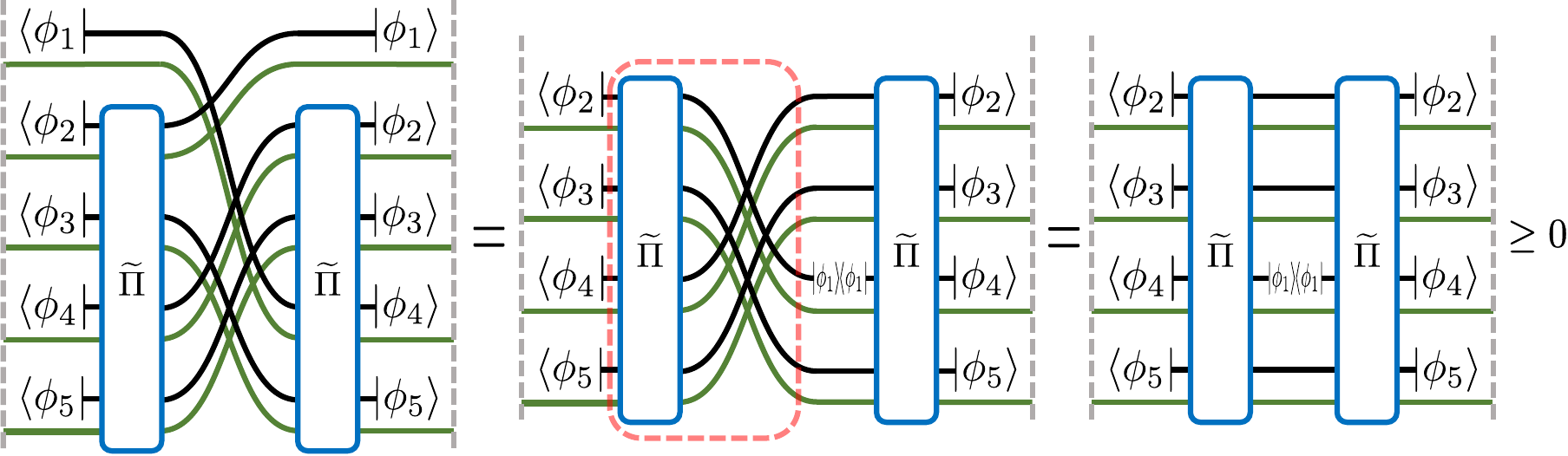}
\caption{We take $W_{\sigma}=(124)(35)$ as an example to illustrate the correctness of Eq.~\eqref{eq:graph_3}. The general situation is similar to this example. We use black and green legs to represent the indices of subsystems $\mc{H}_d$ and $\mc{H}_k$, respectively. The gray dashed lines denote the trace function, which is represented by connecting legs in tensor network diagrams. The product of two terms in the red dashed box equals $\widetilde{\Pi}$. The last tensor network can be seen as an unnormalized state doing inner product with itself, thus is non-negative.}
\label{fig:graph_3}
\end{figure}

\begin{lemma}\label{lemma:mixed_psiApsiB}
For any product pure state $\ket{\Phi}=\ket{\phi_1}\otimes\cdots\otimes\ket{\phi_T}$ in $\mathcal{H}_d^{\otimes T}$, we have
\begin{equation}
\bra{\Phi}\mathcal{V}_T\Tr_{\operatorname{even}}\left(\sum_{\sigma_1\in S_T}W_{\sigma_1}\right)\otimes\Tr_{\operatorname{even}}\left(\sum_{\sigma_2\in S_T}W_{\sigma_2}\right)\mathcal{V}_T^{\dagger}\ket{\Phi}\geq k^T.
\end{equation}
\end{lemma}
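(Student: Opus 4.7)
The plan is to mirror the two inductions used for Lemma~\ref{lemma:pure_product} and Lemma~\ref{lemma:mixed_psi}, fusing the bipartite product structure of the former with the partial-trace structure of the latter. Throughout, $W_{\sigma_1}$ and $W_{\sigma_2}$ act on $\mc{H}_{\sqrt{dk}}^{\otimes T}=\bigl(\mc{H}_{\sqrt{d}}\otimes\mc{H}_{\sqrt{k}}\bigr)^{\otimes T}$, and $\Tr_{\operatorname{even}}$ traces out the $\mc{H}_{\sqrt{k}}$ factors. I will induct on $T$.

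For the base case $T=1$, both sums reduce to the identity on $\mc{H}_{\sqrt{dk}}$, and $\Tr_{\operatorname{even}}\mbb{I}_{\sqrt{dk}}=\sqrt{k}\,\mbb{I}_{\sqrt{d}}$ on each side. The unravelling $\mc{V}_1$ just regroups a single bipartite factor, so the whole quantity equals $k\braket{\phi_1}{\phi_1}=k$, matching $k^T$ with equality.

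For the inductive step, I insert $\mbb{I}\otimes\widetilde{\Pi}$ (the symmetric-subspace projector on the last $T-1$ tensor factors) on both sides of each $W_{\sigma_i}$, which is legal because $\sum_\sigma W_\sigma=T!\,\Pi=T!\,\Pi(\mbb{I}\otimes\widetilde{\Pi})$. Then I split the double sum into the four blocks according to whether $\sigma_1(1)=1$ and/or $\sigma_2(1)=1$, exactly as in Eq.~\eqref{eq:induction_pure_4terms}. In the ``diagonal'' block where both $\sigma_1(1)=1$ and $\sigma_2(1)=1$, the first tensor slot decouples: each permutation acts trivially on position~$1$, so tracing out its $\mc{H}_{\sqrt{k}}$ subsystem yields $\sqrt{k}\mbb{I}_{\sqrt{d}}$ on each copy, the $\mc{V}_T$ regrouping contributes $k\braket{\phi_1}{\phi_1}=k$ to the outer factor, and the remaining expression is exactly the $T-1$ version applied to $\ket{\widetilde\Phi}=\ket{\phi_2}\otimes\cdots\otimes\ket{\phi_T}$, with the usual $(T-1)!$ factors from $\widetilde\Pi W_{\widetilde\sigma}\widetilde\Pi=\widetilde\Pi$. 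The induction hypothesis then gives a lower bound $\geq k\cdot k^{T-1}=k^T$.

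The remaining three blocks will have to be shown to be non-negative, which is the only real obstacle. This is precisely where the graphical arguments of Fig.~\ref{fig:2th-term-1}, Fig.~\ref{fig:4th-term-1}, and Fig.~\ref{fig:graph_3} are combined. For each individual pair $(\sigma_1,\sigma_2)$ (with at least one of them not fixing~$1$), the sandwiched operator
\begin{equation}
\bra{\Phi}\mc{V}_T\bigl[(\mbb{I}\otimes\widetilde\Pi)\Tr_{\operatorname{even}}(W_{\sigma_1})(\mbb{I}\otimes\widetilde\Pi)\bigr]\otimes\bigl[(\mbb{I}\otimes\widetilde\Pi)\Tr_{\operatorname{even}}(W_{\sigma_2})(\mbb{I}\otimes\widetilde\Pi)\bigr]\mc{V}_T^{\dagger}\ket{\Phi}\ge 0
\end{equation}
can be redrawn as the squared norm of an unnormalized tensor obtained by folding the network along the $\widetilde\Pi$ projectors. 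Concretely, $\widetilde\Pi=\widetilde\Pi^2=\widetilde\Pi^\dagger\widetilde\Pi$, so the two copies of $\widetilde\Pi$ on either side of each $W_{\sigma_i}$ split as $\widetilde\Pi\cdot\widetilde\Pi$; the partial trace over the even (i.e., $\mc{H}_{\sqrt k}$) legs then contracts the two halves into mirror images of a single tensor, identifying the whole expression as $\lVert X\rVert^2\ge 0$ for an explicit $X$ depending on $\ket{\Phi},\sigma_1,\sigma_2$. This is essentially the same ``fold along the red dashed box'' argument as in Figs.~\ref{fig:2th-term-1},~\ref{fig:4th-term-1},~\ref{fig:graph_3}, the only new ingredient being that the trace over $\mc{H}_{\sqrt k}$ closes the green legs in pairs, which is manifestly compatible with the folding. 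Summing the non-negative contributions from the three off-diagonal blocks with the $\ge k^T$ contribution from the diagonal block completes the induction and yields the claim.
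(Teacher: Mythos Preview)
Your proof is correct and follows essentially the same route as the paper: induction on $T$, with the base case giving $k$, the inductive step splitting into four blocks via $\mbb{I}\otimes\widetilde\Pi$ according to whether $\sigma_i(1)=1$, the diagonal block yielding $k\cdot k^{T-1}$ by the hypothesis, and the three off-diagonal blocks shown non-negative by the combined graphical folding arguments of Lemmas~\ref{lemma:pure_product} and~\ref{lemma:mixed_psi}. One notational slip: in your displayed inequality the projectors $\mbb{I}\otimes\widetilde\Pi$ should sit \emph{inside} $\Tr_{\operatorname{even}}$ (acting on $\mc{H}_{\sqrt{dk}}^{\otimes T}$), as your prose correctly indicates, not outside on $\mc{H}_{\sqrt d}^{\otimes T}$.
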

\begin{proof}
We prove by induction on $T$. When $T=1$, we have
\begin{equation}
\bra{\phi_1}\Tr_{2,4}\left(\mathbb{I}_{\sqrt{d}}\otimes\mathbb{I}_{\sqrt{k}}\otimes\mathbb{I}_{\sqrt{d}}\otimes\mathbb{I}_{\sqrt{k}}\right)\ket{\phi_1}=k\braket{\phi_1}{\phi_1}=k.
\end{equation}
Now suppose the statement holds for $T-1$. We have
\begin{equation}\label{eq:induction_mixed_4terms}
\begin{aligned}
&\bra{\Phi}\mathcal{V}_T\Tr_{\operatorname{even}}\left(\sum_{\sigma_1\in S_T}W_{\sigma_1}\right)\otimes\Tr_{\operatorname{even}}\left(\sum_{\sigma_2\in S_T}W_{\sigma_2}\right)\mathcal{V}_T^{\dagger}\ket{\Phi}\\
=&\bra{\Phi}\mathcal{V}_T\Tr_{\operatorname{even}}\left(\left(\mbb{I}\otimes\widetilde{\Pi}\right)\sum_{\sigma_1\in S_T}W_{\sigma_1}\left(\mbb{I}\otimes\widetilde{\Pi}\right)\right)\otimes\Tr_{\operatorname{even}}\left(\left(\mbb{I}\otimes\widetilde{\Pi}\right)\sum_{\sigma_2\in S_T}W_{\sigma_2}\left(\mbb{I}\otimes\widetilde{\Pi}\right)\right)\mathcal{V}_T^{\dagger}\ket{\Phi}\\
=&\bra{\Phi}\mathcal{V}_T\Tr_{\operatorname{even}}\left(\left(\mbb{I}\otimes\widetilde{\Pi}\right)\sum_{\sigma_1(1)=1}W_{\sigma_1}\left(\mbb{I}\otimes\widetilde{\Pi}\right)\right)\otimes\Tr_{\operatorname{even}}\left(\left(\mbb{I}\otimes\widetilde{\Pi}\right)\sum_{\sigma_2(1)=1}W_{\sigma_2}\left(\mbb{I}\otimes\widetilde{\Pi}\right)\right)\mathcal{V}_T^{\dagger}\ket{\Phi}\\
+&\bra{\Phi}\mathcal{V}_T\Tr_{\operatorname{even}}\left(\left(\mbb{I}\otimes\widetilde{\Pi}\right)\sum_{\sigma_1(1)\neq1}W_{\sigma_1}\left(\mbb{I}\otimes\widetilde{\Pi}\right)\right)\otimes\Tr_{\operatorname{even}}\left(\left(\mbb{I}\otimes\widetilde{\Pi}\right)\sum_{\sigma_2(1)=1}W_{\sigma_2}\left(\mbb{I}\otimes\widetilde{\Pi}\right)\right)\mathcal{V}_T^{\dagger}\ket{\Phi}\\
+&\bra{\Phi}\mathcal{V}_T\Tr_{\operatorname{even}}\left(\left(\mbb{I}\otimes\widetilde{\Pi}\right)\sum_{\sigma_1(1)=1}W_{\sigma_1}\left(\mbb{I}\otimes\widetilde{\Pi}\right)\right)\otimes\Tr_{\operatorname{even}}\left(\left(\mbb{I}\otimes\widetilde{\Pi}\right)\sum_{\sigma_2(1)\neq 1}W_{\sigma_2}\left(\mbb{I}\otimes\widetilde{\Pi}\right)\right)\mathcal{V}_T^{\dagger}\ket{\Phi}\\
+&\bra{\Phi}\mathcal{V}_T\Tr_{\operatorname{even}}\left(\left(\mbb{I}\otimes\widetilde{\Pi}\right)\sum_{\sigma_1(1)\neq1}W_{\sigma_1}\left(\mbb{I}\otimes\widetilde{\Pi}\right)\right)\otimes\Tr_{\operatorname{even}}\left(\left(\mbb{I}\otimes\widetilde{\Pi}\right)\sum_{\sigma_2(1)\neq1}W_{\sigma_2}\left(\mbb{I}\otimes\widetilde{\Pi}\right)\right)\mathcal{V}_T^{\dagger}\ket{\Phi}.
\end{aligned}
\end{equation}
The first term of the last equation in Eq.~\eqref{eq:induction_mixed_4terms} satisfies
\begin{equation}
\begin{aligned}
&\bra{\Phi}\mathcal{V}_T\Tr_{\operatorname{even}}\left(\left(\mbb{I}\otimes\widetilde{\Pi}\right)\sum_{\sigma_1(1)=1}W_{\sigma_1}\left(\mbb{I}\otimes\widetilde{\Pi}\right)\right)\otimes\Tr_{\operatorname{even}}\left(\left(\mbb{I}\otimes\widetilde{\Pi}\right)\sum_{\sigma_2(1)=1}W_{\sigma_2}\left(\mbb{I}\otimes\widetilde{\Pi}\right)\right)\mathcal{V}_T^{\dagger}\ket{\Phi}\\
=&k\braket{\phi_1}{\phi_1}\bra{\widetilde{\Phi}}\mathcal{V}_{T-1}\Tr_{\operatorname{even}}\left(\widetilde{\Pi}\sum_{\widetilde{\sigma}_1\in S_{T-1}}W_{\widetilde{\sigma}_1}\widetilde{\Pi}\right)\otimes\Tr_{\operatorname{even}}\left(\widetilde{\Pi}\sum_{\widetilde{\sigma}_2\in S_{T-1}}W_{\widetilde{\sigma}_2}\widetilde{\Pi}\right)\mathcal{V}_{T-1}^{\dagger}\ket{\widetilde{\Phi}}\\
=&k\sum_{\widetilde{\sigma}_1,\widetilde{\sigma}_2\in S_{T-1}}\bra{\widetilde{\Phi}}\mathcal{V}_{T-1}\Tr_{\operatorname{even}}\left(\widetilde{\Pi}W_{\widetilde{\sigma}_1}\widetilde{\Pi}\right)\otimes\Tr_{\operatorname{even}}\left(\widetilde{\Pi}W_{\widetilde{\sigma}_2}\widetilde{\Pi}\right)\mathcal{V}_{T-1}^{\dagger}\ket{\widetilde{\Phi}}\\
=&k(T-1)!(T-1)!\bra{\widetilde{\Phi}}\mathcal{V}_{T-1}\Tr_{\operatorname{even}}(\widetilde{\Pi})\otimes\Tr_{\operatorname{even}}(\widetilde{\Pi})\mathcal{V}_{T-1}^{\dagger}\ket{\widetilde{\Phi}}\\
=&k\bra{\widetilde{\Phi}}\mathcal{V}_{T-1}\Tr_{\operatorname{even}}\left(\sum_{\widetilde{\sigma}_1\in S_{T-1}}W_{\widetilde{\sigma}_1}\right)\otimes\Tr_{\operatorname{even}}\left(\sum_{\widetilde{\sigma}_2\in S_{T-1}}W_{\widetilde{\sigma}_2}\right)\mathcal{V}_{T-1}^{\dagger}\ket{\widetilde{\Phi}}\geq k^T,
\end{aligned}
\end{equation}
where $\ket{\widetilde{\Phi}}=\ket{\phi_2}\otimes\cdots\otimes\ket{\phi_T}$ and the last inequality is given by the induction hypothesis. 

Using proof techniques in Lemma~\ref{lemma:pure_product} and Lemma~\ref{lemma:mixed_psi}, we can show that the summands in the second, third, and fourth terms of the last equation in Eq.~\eqref{eq:induction_mixed_4terms} are all non-negative.
\end{proof}

\subsection{Proof of Theorem~\ref{theorem:multipartite}}
For $K\ge2$, consider the following two pure state sets with distributions defined as:
\begin{enumerate}
\item $s_1$: $\rho=\ketbra{\psi}{\psi}$, where $\ket{\psi}\in \mathcal{H}_d$ is a Haar random pure state.
\item $s_2$: $\rho=\ketbra{\psi_1}{\psi_1}\otimes\cdots\otimes\ketbra{\psi_K}{\psi_K}$, where $\ket{\psi_1},\cdots,\ket{\psi_K}\in \mathcal{H}_{d^{1/K}}$ are Haar random pure states.
\end{enumerate}
Suppose a referee tosses a fair coin to select a set, and then chooses a state $\rho$ from the selected set according to the distribution shown above. Then the referee gives $T$ copies of $\rho$ to us. Our task is to decide which set $\rho$ is from, with $T$ copies of $\rho$.

It is easy to find that the state sampled in $s_1$ is $K$-partite entangled with probability $1$ and $0$ for $s_2$.
Thus, given an EDP satisfying the requirements in Theorem~\ref{theorem:multipartite}, we can solve this state discrimination task with a success probability of $2/3$.
It suffices to prove:

\begin{proposition}
Using single-copy adaptive measurement, we need at least $T=\Theta\left(d^{\frac{1}{2K}}\right)$ copies of $\rho$ to find out which set $\rho$ is taken from, to achieve a success probability of at least $2/3$.
\end{proposition}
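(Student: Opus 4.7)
The plan is to imitate the proof of Theorem~\ref{theorem:exponential_complexity_app} step by step, upgrading the two-party tensor-product structure to a $K$-party one. For the lower bound, we apply Le Cam's two-point method to the decision tree representing an arbitrary single-copy adaptive protocol discriminating $s_1$ from $s_2$, so it suffices to upper bound the total variation $\operatorname{TV}\left(\underset{\psi}{\mathbb{E}}p_{\psi},\underset{\psi_1,\ldots,\psi_K}{\mathbb{E}}p_{\psi_1\otimes\cdots\otimes\psi_K}\right)$ between the induced leaf distributions. As before, comparing each distribution with the reference $p_{\mathbb{I}_d/d}$ produced by the maximally mixed state and invoking the triangle inequality reduces the problem to bounding each $\operatorname{TV}(\cdot,p_{\mathbb{I}_d/d})$ separately.

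The first ratio is handled by the analysis of Theorem~\ref{theorem:exponential_complexity_app} applied in dimension $d$, giving $\underset{\psi}{\mathbb{E}}p_{\psi}(l)/p_{\mathbb{I}_d/d}(l)\ge\left(1+(T-1)/d\right)^{-(T-1)}$. For the second, unravelling with $\mc{V}_T^{(K)}$ and applying Lemma~\ref{lemma:expectation_to_permutation} to each of the $K$ factors of dimension $d^{1/K}$ leads to
\begin{equation*}
\frac{\underset{\psi_1,\ldots,\psi_K}{\mathbb{E}}p_{\psi_1\otimes\cdots\otimes\psi_K}(l)}{p_{\mathbb{I}_d/d}(l)}=\left(\frac{d^{T/K}}{(d^{1/K}+T-1)\cdots (d^{1/K}+1)\,d^{1/K}}\right)^{K}\bra{\Phi}\mc{V}_T^{(K)}\bigotimes_{k=1}^{K}\left(\sum_{\sigma_k\in S_T}W_{\sigma_k}\right)\left(\mc{V}_T^{(K)}\right)^{\dagger}\ket{\Phi}.
\end{equation*}
The crucial input is the $K$-party generalization of Lemma~\ref{lemma:pure_product}, namely that for every product state $\ket{\Phi}\in\mc{H}_d^{\otimes T}$,
\begin{equation*}
\bra{\Phi}\mc{V}_T^{(K)}\bigotimes_{k=1}^{K}\Bigl(\sum_{\sigma_k\in S_T}W_{\sigma_k}\Bigr)\left(\mc{V}_T^{(K)}\right)^{\dagger}\ket{\Phi}\ge 1,
\end{equation*}
with each $W_{\sigma_k}$ acting on $\mc{H}_{d^{1/K}}^{\otimes T}$. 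Granting this, the ratio is at least $\left(1+(T-1)/d^{1/K}\right)^{-K(T-1)}$, and feeding the triangle and Le Cam inequalities gives the constraint $\left(1+(T-1)/d^{1/K}\right)^{-K(T-1)}\le 5/6$, which forces $T=\Omega(d^{1/(2K)})$.

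The main technical obstacle is the $K$-party lemma itself, which we intend to prove by induction on $T$ paralleling Lemma~\ref{lemma:pure_product}. Inserting $\mbb{I}\otimes\widetilde{\Pi}$ on each side of every $\sum_{\sigma_k}W_{\sigma_k}$ and then splitting each sum according to whether $\sigma_k(1)=1$ or $\sigma_k(1)\ne 1$ produces a $2^K$-term expansion. The unique term in which all $\sigma_k$ fix $1$ factorizes as $\braket{\phi_1}{\phi_1}$ times the $(T-1)$-copy version of the same expression with $\ket{\widetilde{\Phi}}=\ket{\phi_2}\otimes\cdots\otimes\ket{\phi_T}$, closing the induction via the identity $\widetilde{\Pi}W_{\widetilde{\sigma}_k}\widetilde{\Pi}=\widetilde{\Pi}$. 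Each of the remaining $2^K-1$ mixed terms must be shown to be nonnegative by extending the tensor-network argument used in Figs.~\ref{fig:2th-term-1} and~\ref{fig:4th-term-1} to $K$ stacked layers: one verifies that the sandwiching $\widetilde{\Pi}$ projectors can always be arranged so the diagram closes as the squared norm of an unnormalized vector. Checking this closure uniformly across all $2^K-1$ sign patterns, rather than only the two that appear in the bipartite case, is where the combinatorial bookkeeping is heaviest and where I expect most of the work to sit.

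For the matching upper bound $O(d^{1/(2K)})$, we estimate the purity of the one-party reduction $\rho_1=\Tr_{2\cdots K}(\rho)$ via randomized measurements on $\mc{H}_{d^{1/K}}$. On $s_2$, $\rho_1$ is pure so $\Tr(\rho_1^2)=1$; on $s_1$, Page's formula gives $\mathbb{E}\Tr(\rho_1^2)=(d^{1/K}+d^{(K-1)/K})/(d+1)=\Theta(d^{-1/K})$ with constant concentration, so the two cases are separated by $\Theta(1)$. Applying the randomized-measurement variance bound from Appendix~\ref{sec:upper_bound_proof} to a $d^{1/K}$-dimensional subsystem then shows $O(d^{1/(2K)})$ copies suffice to fulfill the soundness and completeness requirements, matching the lower bound and giving the claimed $\Theta(d^{1/(2K)})$ scaling.
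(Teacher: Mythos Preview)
Your proposal is correct and follows essentially the same route as the paper: Le Cam's two-point method on the decision tree, the same ratio computations via $\mc{V}_T^{(K)}$ and Lemma~\ref{lemma:expectation_to_permutation}, the $K$-party analogue of Lemma~\ref{lemma:pure_product} (which the paper simply asserts holds ``similar to the proof of Lemma~\ref{lemma:pure_product}'' without spelling out the $2^K$-term expansion you outline), and the randomized-measurement upper bound on a $d^{1/K}$-dimensional subsystem. If anything, your sketch of the inductive $2^K$-term argument is more explicit than the paper's own treatment.
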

\begin{proof}
\begin{equation}
\begin{aligned}
\frac{\underset{\psi_1,\cdots,\psi_K}{\mathbb{E}}p_{\psi_1\otimes\cdots\otimes\psi_K(l)}}{p_{\mathbb{I}_d/d}(l)}&=\frac{1}{\prod_{t=1}^T\left(w_{e_t}^{n_{t-1}}\frac{1}{d}\right)}\underset{\psi_1,\cdots,\psi_K}{\mathbb{E}}\prod_{t=1}^{T}w_{e_t}^{n_{t-1}}\braket{\phi_{e_t}^{n_{t-1}}}{\psi_1\cdots\psi_K}\braket{\psi_1\cdots\psi_K}{\phi_{e_t}^{n_{t-1}}}\\
&=d^T\underset{\psi_1,\cdots,\psi_K}{\mathbb{E}}\prod_{i=1}^{T}\braket{\phi_{e_t}^{n_{t-1}}}{\psi_1\cdots\psi_K}\braket{\psi_1\cdots\psi_K}{\phi_{e_t}^{n_{t-1}}}\\
&=d^T\bra{\Phi}\left(\underset{\psi_1,\cdots,\psi_K}{\mathbb{E}}(\ketbra{\psi_1}{\psi_1}\otimes\cdots\otimes\ketbra{\psi_K}{\psi_K})^{\otimes T}\right)\ket{\Phi}\\
&=d^T\bra{\Phi}\mc{V}^{(K)}_T\left(\underset{\psi_1}{\mathbb{E}}\ketbra{\psi_1}{\psi_1}^{\otimes T}\otimes\cdots\otimes\underset{\psi_K}{\mathbb{E}}\ketbra{\psi_K}{\psi_K}^{\otimes T}\right)\mathcal{V}_T^{(K)\dagger}\ket{\Phi}\\
&=\left(\frac{\left(d^{1/K}\right)^T}{\left(d^{1/K}+T-1\right)\cdots\left(d^{1/K}+1\right)d^{1/K}}\right)^K\bra{\Phi}\mathcal{V}^{(K)}_T\left(\sum_{\sigma_1\in S_T}W_{\sigma_1}\right)\otimes\cdots\otimes\left(\sum_{\sigma_K\in S_T}W_{\sigma_K}\right)\mathcal{V}_T^{(K)\dagger}\ket{\Phi},
\end{aligned}
\end{equation}
Similar to the proof of Lemma~\ref{lemma:pure_product}, we have
\begin{equation}
\bra{\Phi}\mathcal{V}^{(K)}_T\left(\sum_{\sigma_1\in S_T}W_{\sigma_1}\right)\otimes\cdots\otimes\left(\sum_{\sigma_K\in S_T}W_{\sigma_K}\right)\mathcal{V}_T^{(K)\dagger}\ket{\Phi}\ge1,
\end{equation}
which implies
\begin{equation}
\begin{aligned}
\frac{\underset{\psi_1,\cdots,\psi_K}{\mathbb{E}}p_{\psi_1\otimes\cdots\otimes\psi_K(l)}}{p_{\mathbb{I}_d/d}(l)}\geq&\left(\frac{\left(d^{1/K}\right)^T}{\left(d^{1/K}+T-1\right)\cdots\left(d^{1/K}+1\right)d^{1/K}}\right)^K\\
=&\left(\frac{1}{\left(1+\frac{T-1}{d^{1/K}}\right)\cdots\left(1+\frac{1}{d^{1/K}}\right)1}\right)^K\geq\left(1+\frac{T-1}{d^{1/K}}\right)^{-K(T-1)}.
\end{aligned}
\end{equation}
Now we have
\begin{equation}
\begin{aligned}
\operatorname{TV}\left(\underset{\psi_1,\cdots,\psi_K}{\mathbb{E}}p_{\psi_1\otimes\cdots\otimes\psi_K},p_{\mathbb{I}_d/d}\right)=&\sum_{l}p_{\mathbb{I}_d/d}(l)\max\left\{0,1-\frac{\underset{\psi_1,\cdots,\psi_K}{\mathbb{E}}p_{\psi_1\otimes\cdots\otimes\psi_K(l)}}{p_{\mathbb{I}_d/d}(l)}\right\}\leq1-\left(1+\frac{T-1}{d^{1/K}}\right)^{-K(T-1)}.
\end{aligned}
\end{equation}
By triangle inequality and Eq.~\eqref{eq:TV_Epsi_I}, we have
\begin{equation}
\begin{aligned}
\operatorname{TV}\left(\underset{\psi\sim \operatorname{Haar}}{\mathbb{E}}p_{\psi},\underset{\psi_1,\cdots,\psi_K}{\mathbb{E}}p_{\psi_1\otimes\cdots\otimes\psi_K}\right)\leq&\operatorname{TV}\left(\underset{\psi\sim \operatorname{Haar}}{\mathbb{E}}p_{\psi},p_{\mathbb{I}_d/d}\right)+\operatorname{TV}\left(\underset{\psi_1,\cdots,\psi_K}{\mathbb{E}}p_{\psi_1\otimes\cdots\otimes\psi_K},p_{\mathbb{I}_d/d}\right)\\
\leq&2-\left(1+\frac{T-1}{d}\right)^{-(T-1)}-\left(1+\frac{T-1}{d^{1/K}}\right)^{-K(T-1)}\\
\le&2-2\left(1+\frac{T-1}{d^{1/K}}\right)^{-K(T-1)}.
\end{aligned}
\end{equation}
To achieve a success probability of at least $2/3$, it must hold that
\begin{equation}
\begin{aligned}
\frac{2}{3}\leq\frac{1}{2}+\frac{1}{2}\operatorname{TV}\left(\underset{\psi\sim \operatorname{Haar}}{\mathbb{E}}p_{\psi},\underset{\psi_1,\cdots,\psi_K}{\mathbb{E}}p_{\psi_1\otimes\cdots\otimes\psi_K}\right)\leq\frac{3}{2}-\left(1+\frac{T-1}{d^{1/K}}\right)^{-K(T-1)},
\end{aligned}
\end{equation}
which implies
$\left(1+\frac{T-1}{d^{1/K}}\right)^{-K(T-1)}\leq\frac{5}{6}$,
thus we must have
\begin{equation}
T\geq\sqrt{\frac{1}{K}\log(\frac{6}{5})}d^{\frac{1}{2K}}+1=\Omega\left(d^{\frac{1}{2K}}\right).
\end{equation}
We also leave the proof of the upper bound in Appendix \ref{sec:upper_bound_proof}.
\end{proof}

\section{Sample Complexity Upper Bound}\label{sec:upper_bound_proof}
To derive the sample complexity upper bound of the pure state discrimination task stated in Appendix \ref{sec:bipartite_lower_bound}, we need to design a protocol and analyze its complexity. The protocol we choose is to measure the purity of subsystem $A$ using randomized measurements. This protocol works because for pure product state $\psi_A\otimes\psi_B$, the purity of subsystem $A$ is always $\Tr(\rho_A^2)=1$. While for global pure state $\psi_{AB}$, the purity can be exponentially small, 
\begin{equation}
\mathbb{E}_{\psi_{AB}}\left[\Tr(\rho_A^2)\right]=\frac{\sqrt{d}+1}{d+1}.
\end{equation}
Therefore, the purity estimation protocol needs to measure purity with constant accuracy to solve this state discrimination task.

The randomized measurement protocol works as follows. 
\begin{enumerate}
\item Pick $N_U$ different unitaries $U_A$ from the Haar measure distribution.
\item Act each unitary on subsystem $A$ and discard subsystem $B$ to get $U_A\rho_AU_A^\dagger$.
\item Measure each $U_A\rho_AU_A^\dagger$ in computational basis for $N_M$ times to get the data $\{b_{U_A}^i\}_{i=1}^{N_M}$.
\end{enumerate}
After acquiring these measurement data, we perform the following calculation to derive the purity of $\rho_A$,
\begin{equation}
\hat{P}_A=\frac{1}{N_U}\frac{1}{N_M(N_M-1)}\sum_{U_A}\sum_{i\neq j}X(b_{U_A}^i,b_{U_A}^j),
\end{equation}
where $X(b,b^\prime)=-(-d_A)^{\delta_{b,b^\prime}}$.
The unbiasedness of this estimator can be verified using the second-order integral over Haar measure unitaries,
\begin{equation}
\begin{aligned}
\mathbb{E}_{U_A\sim\mathrm{Haar}}\mathbb{E}_{\{b_{U_A}^i\}_i}\left(\hat{P}_A\right)=&\mathbb{E}_{U_A\sim\mathrm{Haar}}\sum_{b,b^\prime}X(b,b^\prime)\mathrm{Pr}(b|U_A)\mathrm{Pr}(b^\prime|U_A)\\
=&\mathbb{E}_{U_A\sim\mathrm{Haar}}\sum_{b,b^\prime}X(b,b^\prime)\bra{b}U_A\rho_AU_A^\dagger\ket{b}\bra{b^\prime}U_A\rho_AU_A^\dagger\ket{b^\prime}\\
=&\mathbb{E}_{U_A\sim\mathrm{Haar}}\Tr\left[X(U_A\rho_AU_A^\dagger)^{\otimes 2}\right]\\
=&\mathbb{E}_{U_A\sim\mathrm{Haar}}\Tr\left[U_A^{\dagger\otimes 2}X U_A^{\otimes 2}\rho_A^{\otimes2}\right]\\
=&\Tr(S_A\rho_A^{\otimes 2})=\Tr(\rho_A^2),
\end{aligned}
\end{equation}
where $X=\sum_{b,b^\prime}X(b,b^\prime)\ketbra{bb^\prime}{bb^\prime}$ is a diagonal matrix satisfying $\mathbb{E}_{U\sim\mathrm{Haar}}U^{\otimes 2}XU^{\dagger \otimes 2}=S$ with $S$ being the SWAP operator.

To derive the sample complexity of this protocol, we need to calculate the variance of this estimator. By definition,
\begin{equation}
\mathrm{Var}(\hat{P}_A)=\mathbb{E}(\hat{P}_A^2)-\mathbb{E}(\hat{P}_A)^2.
\end{equation}
As experiments and the estimators for different choices of unitary $U_A$ are independent, the variance can be simplified into
\begin{equation}
\mathrm{Var}(\hat{P}_A)=\frac{1}{N_U}\mathbb{E}_{U_A\sim\mathrm{Haar}}\mathbb{E}_{\{b_{U_A}^i\}_i}\left[\left(\frac{1}{N_M(N_M-1)}\sum_{i \neq j}X(b_{U_A}^i,b_{U_A}^j)\right)^2-\Tr(\rho_A^2)^2\right].
\end{equation}
By expanding the summation term, we have 
\begin{equation}
\begin{aligned}
\left(\frac{1}{N_M(N_M-1)}\sum_{i\neq j}X(b_{U_A}^i,b_{U_A}^j)\right)^2=&\frac{1}{N_M^2(N_M-1)^2}\sum_{i\neq j}X(b_{U_A}^i,b_{U_A}^j)^2\\
+&\frac{1}{N_M^2(N_M-1)^2}\sum_{i\neq j,k\neq i}X(b_{U_A}^i,b_{U_A}^j)X(b_{U_A}^j,b_{U_A}^k)\\
+&\frac{1}{N_M^2(N_M-1)^2}\sum_{i\neq j\neq k\neq l}X(b_{U_A}^i,b_{U_A}^j)X(b_{U_A}^k,b_{U_A}^l).
\end{aligned}
\end{equation}
Defining $X_3=\sum_{b_1,b_2,b_3}=\sum_{b_1,b_2,b_3}X(b_1,b_2)X(b_2,b_3)\ketbra{b_1b_2b_3}{b_1b_2b_3}$, the variance can be written as
\begin{equation}
\begin{aligned}
\mathrm{Var}(\hat{P}_A)=\frac{1}{N_U}\mathbb{E}_{U_A\sim\mathrm{Haar}}\Big{[}&\frac{2}{N_M(N_M-1)}\Tr\left(U_A^{\dagger\otimes 2}X^2 U_A^{\otimes 2}\rho_A^{\otimes 2}\right)+\frac{4(N_M-2)}{N_M(N_M-1)}\Tr\left(U_A^{\dagger\otimes 3}X_3 U_A^{\otimes 3}\rho_A^{\otimes 3}\right)\\
&+\frac{(N_M-2)(N_M-3)}{N_M(N_M-1)}\Tr\left(U_A^{\dagger\otimes 4}X^{\otimes 2} U_A^{\otimes 4}\rho_A^{\otimes 4}\right)-\Tr(\rho_A^2)^2\Big{]}.
\end{aligned}
\end{equation}
Using our conclusion in Ref.~\cite{liu2022permutation}, the expectation terms in this equation are
\begin{equation}
\begin{aligned}
\mathbb{E}_{U_A\sim\mathrm{Haar}}\Tr\left(U_A^{\dagger\otimes 2}X^2 U_A^{\otimes 2}\rho_A^{\otimes 2}\right)=&d_A+(d_A-1)\Tr(\rho_A^2)\\
\mathbb{E}_{U_A\sim\mathrm{Haar}}\Tr\left(U_A^{\dagger\otimes 3}X_3 U_A^{\otimes 3}\rho_A^{\otimes 3}\right)=&-\frac{1}{d_A+2}\left[1+2\Tr(\rho_A^2)\right]+\frac{3(d_A+1)}{d_A+2}\Tr(\rho_A^3)\sim 3\Tr(\rho_A^3)\\
\mathbb{E}_{U_A\sim\mathrm{Haar}}\Tr\left(U_A^{\dagger\otimes 4}X^{\otimes 2} U_A^{\otimes 4}\rho_A^{\otimes 4}\right)\sim&\Tr(\rho_A^2)^2.
\end{aligned}
\end{equation}
Thus, the variance can be bounded by
\begin{equation}
\mathrm{Var}(\hat{P}_A)\le\frac{1}{N_U}\left[\frac{C_1 d_A}{N_M^2}+\frac{C_2\Tr(\rho_A^3)}{N_M}\right],
\end{equation}
where $C_1$ and $C_2$ are constants that are independent of system size. 
Therefore, to make sure the variance is a constant independent of the system size, it is sufficient for $N_U$ to be a constant and $N_M=O(d_A^{1/2})$. 
Thus the total sample complexity scales as $\mathcal{O}(d^{1/4})$.

The conclusion is easy to generalize to multipartite scenarios, where $d_A=d^{1/K}$ and the sample complexity is $\mathcal{O}(d^{1/2K})$.

\end{document}